\documentclass[12pt]{article}
\usepackage{amsmath}
\usepackage{amssymb,amsthm}
\usepackage{color}

\newtheorem{Thm}{Theorem}[section]
\newtheorem{theorem}[Thm]{Theorem}

\newtheorem{corollary}[Thm]{Corollary}

\newtheorem{definition}[Thm]{Definition}

\title{Perturbative versus non-perturbative quantum field theory: Tao's method, the Casimir effect and interacting Wightman theories}     
 
\author{Walter Felipe Wreszinski, Instituto de F\'{\i}sica, Universidade de S\~{a}o \\
  Paulo, 05508-090 S\~{a}o Paulo, SP, Brasil, wreszins@gmail.com}

\begin{document}

\maketitle

\begin{abstract}
 We dwell upon certain points concerning the meaning of quantum field theory: the problems with the 
perturbative approach, and the question raised by 't Hooft of the existence of the theory in a well-defined
(rigorous) mathematical sense, as well as some of the few existent mathematically precise results on fully
quantized field theories. Emphasis is brought on how the mathematical contributions help to elucidate or illuminate
certain conceptual aspects of the theory when applied to real physical phenomena, in particular, the singular nature
of quantum fields. In a first part, we present a comprehensive review of divergent versus asymptotic series, 
with qed as background example, as well as a method due to Terence Tao which conveys mathematical sense to divergent series. 
In a second part we apply Tao's method to the Casimir effect in its simplest form, consisting of perfectly conducting 
parallel plates, arguing that the usual theory, which makes use of the Euler-MacLaurin formula, still contains a residual 
infinity, which is eliminated in our approach. In the third part, we revisit the general theory of nonperturbative quantum fields, 
in the form of newly proposed (with Christian Jaekel) Wightman axioms for interacting
field theories, with applications to ``dressed'' electrons in a theory with massless particles (such as qed), as well as
unstable particles. Various problems (mostly open) are finally discussed in connection with concrete models.

\end{abstract}

\vspace{2pc}
\noindent{\it Keywords}: quantum fields, divergent series, Casimir effect, Wightman axioms, singularity hypothesis.

\section{Introduction }

One of the most impressive measurements in physics is of the
electron anomalous magnetic moment of the electron, which is known 
with a precision of a few parts in $10^{14}$~\cite{pdg}. On the other hand, 
the theoretical prediction needed to match this result requires the use of
perturbation theory containing up to five loop contributions. As is
well known, higher order contributions in systems
containing an infinite number of degrees of freedom lead to the
appearance of divergences which are taken care of by 
renormalization theory \cite{Weinb1}. The \emph{renormalized}
perturbation series in enormously successful theories such as quantum
electrodynamics (qed) is, however, strongly conjectured to be a \emph{divergent} series
(see \cite{Thi} for the first result in this direction), and G. 't Hooft \cite{Hoo}
has, among others, raised the question whether qed has a meaning in a
rigorous mathematical sense. 
 
Several years ago, Arthur S. Wightman, one of the leaders of axiomatic quantum field theory,
asked the related question ``Should we believe in quantum field theory?'', in a still very
readable review article \cite{Wig1}. After his program of constructing interacting 
quantum field theories in four space-time dimensions failed, the question remains in
the air, and we should like to suggest in the present review that the answer to
it does remain affirmative, although the open problems are very difficult.

One of the problems with quantum field theory (qft) is that it seems to have 
lost contact with its main object of study, viz. explaining the observed phenomena
in the theory of elementary particles. One instance of this fact is that, except for a few 
of the lightest particles, all the remaining ones are unstable, and there exists up
to the present time no single mathematically rigorous model of an unstable particle
(this is reviewed in \cite{Wre1}, and we come back to this point in sections 4 and 5).
Alternative theories have not supplied any new experimentally measured numbers in 
elementary particle physics, such as the impressive one in qed mentioned above. 

A recent, important field of applications of qft came from condensed matter physics: the structure of graphene, 
of which certain (possibly experimentally verifiable) phenomena have been successfully studied by Gavrilov, Gitman et al 
\cite{GavGit} with nonperturbative methods from the theory of strong-field qed with unstable vacuum.
So far, however, the only experimental consequence of a non-perturbative quantum field theoretic model 
concerns the Thirring model \cite{Thirrm} in its
lattice version, the Luttinger model. The latter's first correct solution was due to Lieb and Mattis \cite{LiMa};
the rigorous explanation of the thereby new emergent quasi-particles through fluctuation observables was provided by
Verbeure and Zagrebnov in \cite{VerZag}. Luttinger's model yields a well-established picture of conductivity along 
one-dimensional quantum wires \cite{MaMa}. 

In section 2, we present a comprehensive review of divergent versus asymptotic series, 
with qed as background example, as well as a method due to Terence Tao which conveys mathematical sense to divergent series. 

In section 3, we apply Tao's method to one of the very few nonperturbative effects in qed, the 
Casimir effect in its simplest form, consisting of perfectly conducting 
parallel plates, arguing that the usual theory, which makes use of the Euler-MacLaurin formula, still contains a residual 
infinity, which is eliminated in our approach. The fact that Tao's smoothing of the discrete sums eliminates the divergences 
is seen to be directly related to the singular nature of the quantum fields. 

In section 4, we revisit the general theory of nonperturbative quantum fields, 
in the form of newly proposed (together with Christian Jaekel) Wightman axioms for interacting
field theories, with applications to ``dressed'' electrons in a theory with massless particles (such as qed), as well as
unstable particles \cite{JaWre1}. Here, again, the singular nature of quantum fields appears, in the form of a 
``singularity hypothesis'', characteristic of \emph{interacting} quantum fields, which
plays a major role. It clarifies the role of the nonperturbative wave-function renormalization constant $Z$, and permits
to show that the condition $Z=0$ is not a \emph{universal} one for interacting theories, but is, rather, related to
different phenomena: the ``dressing'' of particles in charged sectors, in the present of massless particles, considered
by Buchholz \cite{Buch}, as well as the existence of unstable particles. As remarked by Weinberg \cite{Weinb1}, the 
characterization of unstable particles by the condition $Z=0$ is an intrinsically \emph{nonperturbative} phenomenon.

Section 5 discusses the latter problem in connection with some concrete models of the Lee-Friedrichs type (see \cite{Wre1}
and references given there), both in the bound state case (atomic resonances) as well as for particles, in which case
the problem remains open. The relation to Haag's theorem \cite{StreWight} is pointed out, completing the ``singularity picture''
of quantum fields developped in the preceding sections. 

Section 6 is reserved to a summary and alternative approaches. 

Section 7 is a conclusion.

\section{Asymptotic versus divergent series: Tao's method}

We now come back to  the perturbation series
for the renormalized gyromagnetic ratio $g$ of the electron (see \cite{JJS}, p. 79,
109), which is a prototype of divergent series in physics:
\begin{equation}
 \frac{g-2}{2} = \frac{1}{2}\left(\frac{\alpha}{\pi}\right)-0.328
\left(\frac{\alpha}{\pi}\right)^{2}+O((\alpha)^{3})  \; ,
\label{(1)}
\end{equation}
where $\alpha = \frac{1}{137.0\cdots}$ denotes the fine-structure
constant. As remarked by Wightman \cite{Wig1}, which we now follow,
the series is likely to be an \emph{asymptotic series}, defined as
below.

In mathematics, divergent series have been studied since the XVIIIth
century and many summability methods have been developed; see for
instance \cite{H}. In particular, Terence Tao \cite{Tao} suggested a 
powerful smoothing method which explains the numbers obtained by these
summability methods. On the other hand, one of the few nonperturbative 
effects in qed, in which a divergent series appears, is the calculation of the 
force between two infinite conducting
planes, also known as the Casimir effect \cite{CasOrig}.  Here, the
zero point energy of the electromagnetic field diverges due to the
existence of an infinite number of normal modes.
In section 3 we study the Casimir effect using Tao's summation method. 

To set the stage, we now recapitulate the
concepts of asymptotic and divergent series, reserving subsection 2.2 to the
presentation of inconsistencies in the treatment of divergent series
by the Euler and Ramanujan summation methods, and subsection 2.3 to a
brief review of Tao's method of smoothed sums.

\subsection{Asymptotic and divergent series}

Given a function $f$, defined on an open interval $a<x<b$,
one says that a function $f_{N}$ defined on the interval is
\emph{asymptotic to f at order $N$ at a} if
\begin{equation}
\lim_{x\to a+} \frac{f(x)-f_{N}(x)}{(x-a)^{N}} = 0
\label{(2)}
\end{equation}
where $x \to a+$ means that $x$ tends to $a$ from the right. This general definition applies in particular to the expression 
$$
f_{N}(x)= \sum_{n=0}^{N}a_{n}(x-a)^{n}
$$
as follows: given a sequence $(a_{0},a_{1},a_{2}, \cdots)$, if
$\sum_{n=0}^{N}a_{n}(x-a)^{n}$ is asymptotic to $f$ of order $N$ at
$a$ for all $N=0,1,2, \cdots$, then one says that the \emph{series}
$\sum_{n=0}^{\infty} a_{n}(x-a)^{n}$ is \emph{asymptotic to $f$ at a}
and writes
\begin{equation}
f(x) \sim \sum_{n=0}^{\infty} a_{n} (x-a)^{n} \;.
\label{(3)}
\end{equation}

Now, if $\lim_{x\to a+}(f(x)-f_{N}(x))(x-a)^{-N}=0$, then it follows
that $\lim_{x\to a}(f(x)-f_{N}(x))(x-a)^{-k}=0$ for
$k=0,1, \cdots N-1$. Therefore, when $\sum_{n=0}^{N}a_{n}(x-a)^{n}$ is
asymptotic to $f$ of order $N$ at $a$,
\begin{eqnarray}
  &&\lim_{x\to a+} f(x) = a_{0} = f(a+)
     \nonumber
  \\
  &&\lim_{x\to a+} (f(x)-f(a+))(x-a)^{-1} = a_{1} = f^{(1)}(a+)
  \nonumber\\
&&\lim_{x\to a+} (f(x)-f(a+)-f^{'}(a+)(x-a))(x-a)^{-2} = a_{2} =
   \frac{f^{(2)}(a+)}{2!} 
  \cdots \label{(4alt)}
  \\
&&\lim_{x\to a+} (f(x)-\sum_{n=0}^{N-1} \frac{f^{(n)}(a+)}{n!}(x-a)^{-N} = a_{N} = \frac{f^{(N)}(a+)}{N!}
\nonumber
\end{eqnarray}
Above, and throughout the paper, the superscripts denote the orders of
the derivatives.  Thus, when $\sum_{n=0}^{N}a_{n}(x-a)^{n}$ is
asymptotic to $f$ of order $N$ at $a$, the function $f$ has
necessarily $N$ derivatives from the right at $a$ and the coefficients
$a_{n}$ are uniquely determined as Taylor coefficients
$$
a_{n} = \frac{f^{(n)}(a+)}{n!} \mbox{ for } n=0,1, \cdots N \;.
$$
Thus, as Wightman observes, the assertion, that the gyromagnetic
anomaly $\frac{g-2}{2}$ has the asymptotic series
$\sum_{n=1}^{\infty}a_{n}(\alpha/\pi)^{n}$ at $0$, means ``no more and
no less'' than the fact that $\frac{g-2}{2}$ is defined for $\alpha$
in some interval $0<\alpha<\alpha_{0}$, and has derivatives of all
orders from the right at $0$: then the above formula for the $a_{n}$
holds for all $N$ with $f=\frac{g-2}{2}$ and $a=0$. If we thus write,
following \eqref{(2)} to \eqref{(4alt)}, 
%
\begin{equation}
f(x) = \sum_{n=0}^{N}\frac{f^{(n)}(a+)}{n!}(x-a)^{n} + R_{N}(x) \;,
\label{(4)}
\end{equation}
where $R_{N}$ is the rest in Taylor's series (see, e.g., \cite{Bu},
p. 126):
\begin{equation}
R_{N}(x) = \int_{a}^{x} \frac{f^{(N+1)}(t)}{N!}(x-t)^{N} dt
\label{(5)}
\end{equation}
and
\begin{equation}
R_{N}(x) = O((x-a)^{N+1}) \;.
\label{(6)}
\end{equation}
Above, $g(x) = O(h(x))$ as $x \to a+$ means that there exists an open
interval $(a,a_{0})$ and a positive number $A$ independent of
$x \in [a,a_{0})$ such that
\begin{equation}
\left |\frac{g(x)}{h(x)} \right| \le A \mbox{ for all } x \in
[a,a_{0}) \;.
\label{(7)}
\end{equation}
It follows then, by what was said before, that
$\sum_{n=0}^{N}\frac{f^{(n)}(a+)}{n!}(x-a)^{n}$ is asymptotic to $f$
at $a+$.


The above is an example of a power series: it may, or not, be
convergent. We come now to a general asymptotic series, which will be
shown to be divergent and will play an important role in section 3,
the \emph{Euler-Maclaurin sum formula} (\cite{H}, Chap. XIII,
p.318). Let a function $f$ be given which satisfies certain conditions
\cite{H}: we shall assume that f is infinitely differentiable to the
right at $a$. Let
\begin{equation}
g(N) \equiv \sum_{n=1}^{N}f(n)- \int_{a}^{N} f(x)dx - C - \frac{1}{2}f(N)
\label{(10.1)}
\end{equation}
Then
\begin{equation}
g(N) \thicksim  \sum_{r=1}^{\infty}(-1)^{r-1}\frac{B_{r}}{2r!}f^{(2r-1)}(N)
\label{(10.2)}
\end{equation}
Above, $C$ is a constant, and the $\{B_{r}\}$ are the Bernoulli numbers, defined recursively by the formula
\begin{equation}
\sum_{k=0}^{s-1} \frac{s!}{k!(s-k)!} B_{k} = s
\label{(10.3)}
\end{equation}
for all $s=1,2,\cdots$, or equivalently, by the generating function
\begin{equation}
\sum_{k=0}^{\infty} \frac{B_{k}}{k!} t^{k} = \frac{t\exp(t)}{(\exp(t)-1)}
\label{(10.4)}
\end{equation}
The first few Bernoulli numbers are
\begin{equation}
B_{0}=1;B_{1}=\frac{1}{2};B_{2}=\frac{1}{6};B_{3}=0;B_{4}=-\frac{1}{30};\cdots
\label{(10.5)}
\end{equation}
We refer to (\cite{Tao}, p. 7 or (\cite{H}, p.320)).

As in the case of power series, the sign $\sim$ in \eqref{(10.2)} means
\begin{equation}
g(n) = \sum_{r=1}^{N} (-1)^{r-1} \frac{B_{r}}{(2r)!} f^{(2r-1)}(n) + R_{N}(n)
\label{(10.6)}
\end{equation}
where, for each finite $N$,
\begin{equation}
\lim_{n\to \infty} \frac{R_{N}(n)}{\sum_{r=1}^{N} (-1)^{r-1} \frac{B_{r}}{(2r)!} f^{(2r-1)}(n)} = 0    
\label{(10.7)}
\end{equation}
but, for each finite $n$,
\begin{equation}
\lim_{N \to \infty} R_{N}(n) = \infty
\label{(10.8)}
\end{equation}
\eqref{(10.8)} expresses the fact that, in general, the Euler-Maclaurin
series \emph{diverges}, due to the fast increase of the Bernoulli
numbers $B_{r}$ with $r$. For an example, take $a=1$ and $f(x)=\log x$
in \eqref{(10.1)}. Then
\begin{equation}
g(n) = \log(n!)-(n+\frac{1}{2})\log n + n + \frac{1}{2}\log(2\pi)
\label{(10.9)}
\end{equation}
with $C=\frac{1}{2}\log(2\pi)$ (for a justification, see \cite{H}). In this case, \eqref{(10.6)} may be written
\begin{equation}
g(n) = \sum_{m=1}^{N} \frac{B_{2m}}{2m(2m-1)n^{2m-1}} + R_{N}(n)
\label{(10.10)}
\end{equation}
where the remainder
\begin{equation}
|R_{N}(n)| \le \frac{B_{2N+2}}{(2N+1)(2N+2)n^{2N+1}}
\label{(10.11)}
\end{equation}
(see (\cite{AS}, 6.1.42)). Therefore, both \eqref{(10.7)} and
\eqref{(10.8)} are seen to hold. The zeroth term in \eqref{(10.9)},
which corresponds to set $g(n)=0$ in \eqref{(10.9)}, is the well-known
Stirling approximation, widely used in statistical mechanics.

The divergence of the series \eqref{(10.1)} is due to the inequalities (\cite{AS},23.1.15):
\begin{equation}
(-1)^{n+1} B_{2n} > \frac{2(2n)!}{(2\pi)^{2n}}
\label{(10.12)}
\end{equation}
as well as the fact that, even if the function $f$ (as a function on the positive reals) is analytic in a neighbourhood of the origin, and thus satisfies the bound
\begin{equation}
|f^{(r)}| \le r!c^{r}
\label{(10.13)}
\end{equation}
for some constant $c$, by Cauchy's integral formula, the $r$-th term
of the series is of order $r!$ by \eqref{(10.12)}. This is explicitly
seen in the special case $f(x)=\log x$ and $a=1$ : \eqref{(10.11)} is
obtained.

We now come back to the power series example \eqref{(1)}. As remarked in the introduction, the first
proof of divergence of perturbation theory for a quantum field
theoretic model (scalar field $\Phi$, with interaction term
$\lambda \Phi^{3}$) was given by Thirring \cite{Thi}. His results are consistent
with the assumption (here extrapolated to the unknown case of qed) that the sum
\begin{equation}
f(\alpha) = \frac{g-2}{2} = \sum_{n=0}^{\infty} a_{n} \alpha^{n}
\label{(8)}
\end{equation}
in \eqref{(1)} is such that the $a_{n}$ increase no worse than $n!$, that is, the $R_{N}(\alpha)$, defined by \eqref{(5)}, is
\begin{equation}
R_{N}(\alpha) = O(N!\alpha^{N})
\label{(9)}
\end{equation}
One may try to find $N$ such that this remainder after $N$ terms is
smallest possible. By \eqref{(9)} and Stirling's formula
(\eqref{(10.9)}, with g(n)=0), and treating $N$ as continuous parameter
for simplicity, we find
\begin{equation}
\log(R_{N}(\alpha)) \approx N\log(N)-N+N\log(\alpha)
\label{(10)}
\end{equation}
from which
\begin{equation}
\frac{d}{dN} \log(R_{N}(\alpha)) \approx \log(N)+\log(\alpha)=0 \mbox{ with solution } \alpha=\frac{1}{N}
\label{(11)}
\end{equation}
The above value of $\alpha$ characterizes a minimum because 
\begin{equation}
\frac{d^{2}}{dN^{2}}(\log(R_{N}(\alpha))_{\alpha=\frac{1}{N}}=\frac{1}{N}>0
\label{(12)}
\end{equation}
We now come to a fundamental question posed by Wightman in
(\cite{Wig1}, p. 994): if a series is asymptotic, is it useful
theoretically or experimentally? His answer is

\emph{Observation A} ``If one knows nothing about the function to which the series is asymptotic, the answer is no''

We shall come to Observation A again later, but, for the moment, adopt
the ``practical'' attitude he advocates in (\cite{Wig1}, p. 995):
according to \eqref{(10)}-\eqref{(12)}, the description of the
function $f(\alpha)$ in \eqref{(8)} should improve with the order of
approximation, until one gets to order $N \approx 137$ by
\eqref{(11)}, and then one should stop; by \eqref{(12)}, beyond that
order, the approximations will become worse. Richard P. Feynman
comments on this issue in the discussion in \cite{Wig2}, p. 226: ``The
question is whether this theory if carried out to the ultimate in all
orders will give a satisfactory series (I don't mean in agreement with
experiments, but with logic). Is it unitary, for example, in the 137th
order? I do not know, and am not at all convinced that it is''. Gerard
't Hooft \cite{Hoo} also observes (p.11): ``Fact is, however, that
there is no proof of the existence of such a model beyond its
perturbation expansion''. These remarks suggest that there are,
indeed, deep issues associated to the (expected) divergence of series
such as \eqref{(1)}.

We now come back to Observation A: what can be said about the function
$f$ to which a given series is asymptotic? One first important remark
concerns \emph{unicity}. The function
\begin{equation}
f_{0}(z) = \exp\left(-\frac{1}{z^{\beta}}\right) \mbox{ with } 0<\beta<1
\label{(13)}
\end{equation}
has the property
\begin{equation}
f_{0}^{(n)}(0+) = 0 \mbox{ for all } n=0,1,2, \cdots
\label{(14)}
\end{equation}
There is, thus, in general, no unique function $f$ to which a given
series is asymptotic, because the function $f_{0}$ may be added to
$f$, by \eqref{(14)}.  There are, however, so-called \emph{summability
  methods}: operations on infinite series, divergent or convergent,
which yield convergent series or functions. For functions, one such
powerful method of summability is that of Borel. Applied to a formal
power series
$$
\sum_{n=0}^{\infty} a_{n}z^{n}
$$
it yields
$$
\frac{1}{x}\int_{0}^{\infty} \exp\left(-\frac{z}{x}\right)
\left(\sum_{n=0}^{\infty}\frac{a_{n}z^{n}}{n!}\right) dz
$$
The above sum under the integral sign is understood as defined by
analytic continuation if necessary. As reviewed by Wightman in
\cite{Wig1}, there are properties which may be assumed on the function
$f$ in \eqref{(3)} which guarantee, for instance, that the Borel
summability method yields a unique answer, the right one. They
typically exclude functions of the form \eqref{(13)}, but are very
difficult to prove in concrete situations such as \eqref{(1)},
because, as remarked by 't Hooft, in the case of qed nothing is known
beyond the asymptotic series on the right of \eqref{(3)} (or, in a
concrete case, \eqref{(1)}).

One may also be concerned with associating a \emph{number} (not a
function) to a given divergent series of scalars by a given
summability method. A unicity issue of a different kind arises: given
one such series, does the possibility exist that different summability
methods yield different (finite) numbers?  This issue is, of course,
not new, and is dealt at length, and with elegance, in G. H. Hardy's
monograph on divergent series \cite{H}. In \cite{H}, p.346, paragraph
13.17, Hardy observes that the summability methods of Euler and
Ramanujan ``have a narrow range and demand great caution in their
application''. Today, it is well known that the usual rules of
calculation cannot be used when handling divergent series (see, e.g.,
the introduction in \cite{Can}). We review the subject, however, in
subsection \ref{sec:2.2}, as a useful introduction to Tao's method in subsection
\ref{sec:2.3}.

\subsection{Inconsistencies in the standard treatment of divergent series}
\label{sec:2.2}

We now consider two examples, the divergent series $S_{0}$ and $S_{1}$:
\begin{equation}
S_{0} = 1+1+1+ \cdots
\label{(2.1)}
\end{equation}
and
\begin{equation}
S_{1} = 1+2+3+ \cdots
\label{(2.2)}
\end{equation}

Although divergent, the series $S_{0}$ and $S_{1}$ may be ``evaluated''
by certain summability methods, two of which we now explain. For the
first, consider the Riemann zeta function $\zeta(s)$, defined as the
series
\begin{equation}
\zeta(s) = \sum_{n=1}^{\infty} \frac{1}{n^{s}}
\label{(2.3)}
\end{equation}
which converges for $\Re(s)> 1$. As shown in \cite{H}, analytic
continuation from a certain complex integral representation for
$\zeta(s)$ may be used to extend $\zeta$ to values of $s$ beyond
$\Re(s)>1$. Let $\zeta_{a.c.}$ denote the analytically continued
function, and call \emph{Ramanujan sums}, denoted by the symbol
${\cal R}$, as in \cite{H}, the corresponding sums, e.g., for
\eqref{(2.1)} and \eqref{(2.2)},
\begin{equation}
S_{0}({\cal R}) = \zeta_{a.c.}(0) = -\frac{1}{2} 
\label{(2.4)}
\end{equation}
and
\begin{equation}
S_{1}({\cal R}) = \zeta_{a.c.}(-1) = -\frac{1}{12} \; .
\label{(2.5)}
\end{equation}

For the second method, due to Euler, consider a (possibly divergent) series $S \equiv \sum_{n=1}^{\infty} a_{n}$, and define the function
\begin{equation}
f_{S}(t) = \sum_{n=1}^{\infty} t^{n}a_{n}
\label{(2.6)}
\end{equation}
for $|t| \le 1$. Assume that $\sum_{n=1}^{\infty} t^{n}a_{n} < \infty$ if $|t|<1$, and define the Euler sum of the original series $S$ by
\begin{equation}
S({\cal E}) = \lim_{t\to 1-} \sum_{n=1}^{\infty} t^{n}a_{n}
\label{(2.7)}
\end{equation}

It is clear that
\begin{equation}
f_{S_{0}}(t) = \frac{1}{1-t}
\label{(2.8.1)}
\end{equation}
and
\begin{equation}
f_{S_{1}}(t) = \frac{1}{(1-t)^{2}}
\label{(2.8.2)}
\end{equation}
and thus, by \eqref{(2.7)},
\begin{equation}
S_{0}({\cal E}) = S_{1}({\cal E}) = +\infty \;.
\label{(2.8.3)}
\end{equation}
Thus, the Euler sums of $S_{0}$ and $S_{1}$ disagree with their
Ramanujan counterparts, as seen from \eqref{(2.4)}, \eqref{(2.5)} and
\eqref{(2.8.3)}.  We have, however, for $\Re( s)>1$,
\begin{equation}
1^{-s}+3^{-s}+5^{-s}+\cdots = (1-2^{-s})\zeta(s) 
\label{(2.9)}
\end{equation}
\begin{equation}
2^{-s}+4^{-s}+\cdots = 2^{-s}\zeta(s)
\label{(2.10)}
\end{equation}
Subtracting \eqref{(2.10)} from \eqref{(2.9)} we find
\begin{equation}
(1-2^{1-s})\zeta(s) = 1^{-s}-2^{-s}+3^{-s}-4^{-s}+\cdots
\label{(2.11)}
\end{equation}
or
\begin{equation}
\zeta(s) = (1-2^{1-s})^{-1}(1^{-s}-2^{-s}+3^{-s}-4^{-s}+\cdots)
\label{(2.12)}
\end{equation}
\eqref{(2.12)} expresses $\zeta(s)$ as an alternating series, to
which, when Euler summation is applied, indeed yields (as may be
easily checked):
\begin{equation}
S_{0}({\cal E}) = -\frac{1}{2}
\label{(2.13)}
\end{equation}
as well as
\begin{equation}
S_{1}({\cal E}) = -\frac{1}{12}
\label{(2.14)}
\end{equation}
What is, now, the Ramanujan sum associated to $\zeta$, given by the
r.h.s. of \eqref{(2.12)}, in the case $s=-1$?  For $s=-1$, it
corresponds to the divergent alternating series
\begin{equation}
S_{1}^{'} \equiv -\frac{1}{3}(1-2+3-4+\cdots)
\label{(2.15)}
\end{equation}
We find, however, that operations of the same type as those leading to
\eqref{(2.12)} (summation term by term, multiplication by a scalar)
yield the following Ramanujan sum values:
\begin{equation}
2+4+6+\cdots = 2(1+2+3+\cdots) = 2S_{1}({\cal R}) = -\frac{1}{6}
\label{(2.16)}
\end{equation}
and further
\begin{equation}
1+3+5+\cdots = 2+4+6+\cdots -(1+1+1+\cdots)=2S_{1}({\cal R})-S_{0}({\cal R})=-\frac{1}{6}+\frac{1}{2}=\frac{1}{3}
\label{(2.17)}
\end{equation}
by \eqref{(2.4)} and \eqref{(2.5)}; thus, by \eqref{(2.15)}
\begin{equation}
-3S_{1}({\cal R})=(1+3+\cdots)-(2+4+\cdots)=\frac{1}{3}+\frac{1}{6}=\frac{1}{2}
\label{(2.18)}
\end{equation}
and finally
\begin{equation}
S_{1}^{'}({\cal R})=\frac{-1}{3}\times \frac{1}{2}=-\frac{1}{6} \neq S_{1}({\cal R})
\label{(2.19)}
\end{equation}
Therefore, although \eqref{(2.12)} coincides with \eqref{(2.3)}
throughout the region of convergence, its Ramanujan sum \emph{differs}
from the corresponding one for \eqref{(2.3)}. This means that the
algebraic operations of scalar multiplication and term-by-term
summation are not preserved by analytic continuation, about which we
shall shortly say more. Incidentally, \eqref{(2.19)} and \eqref{(2.14)}
imply

\emph{Observation B} The Euler and Ramanujan summation methods may yield \emph{different finite} values for a given divergent series.  
 
Observation B does not seem, surprisingly, to have been made before,
although Hardy (\cite{H}, p.345, paragraph 13.17) remarked the
disagreement between \eqref{(2.5)} and the second of \eqref{(2.8.3)}.

There exists a way of reconciling \eqref{(2.15)} with \eqref{(2.5)},
which is pointed out by Hardy (\cite{H}, p.346, paragraph
13.17). Interpret, in \eqref{(2.18)}, $1+3+\cdots$ as
$1+0+3+0+\cdots$, and $2+4+\cdots$ as $0+2+0+4+\cdots$ , which are
consistent with the r.h.s. of \eqref{(2.9)} and \eqref{(2.10)} for
$s=-1$. In this way, one obtains for the r.h.s. of \eqref{(2.15)}
\begin{equation}
S_{1}^{''}({\cal R}) = -\frac{1}{3}((-1)(-\frac{1}{12})-2(\frac{-1}{12}))=-\frac{1}{3}\times \frac{1}{4}=-\frac{1}{12}
\end{equation}
which does agree with \eqref{(2.5)}! The value of
$S_{1}^{''}({\cal R})$ depends, however, on the bizarre properties
$(1+0+3+0+\cdots) \neq (1+3+\cdots)$, as well as
$(0+2+0+4+\cdots) \neq (2+4+\cdots)$, i.e., the sums are not invariant
by the addition of zeroes, or, alternatively, term-by-term summation
is not allowed (or does not hold). Summarizing, analytic continuation
does not preserve the fundamental algebraic properties of scalar
multiplication and term-by-term summation; as a consequence,
Observation B holds. See also \cite{Can} about this issue.

Equation \eqref{(2.12)} motivates the introduction of a third
divergent series, called \emph{Grandi's series} $G$:
\begin{equation}
G = \sum_{n=1}^{\infty} (-1)^{n-1} = 1-1+1-\cdots
\label{(2.20)}
\end{equation}
 
The partial sums
\begin{equation}
\sum_{n=1}^{N} (-1)^{n-1} = \frac{1}{2} + \frac{1}{2}(-1)^{N-1}
\label{(2.21)}
\end{equation}
oscillate between $1$ and $0$, so that the series is neither
conditionally nor absolutely convergent. By \eqref{(2.12)} and
\eqref{(2.4)}, its Ramanujan sum is
\begin{equation}
G({\cal R}) = \frac{1}{2}
\label{(2.22)}
\end{equation}
Parenthetically, define the Ces\`{a}ro (C) sums by, for $S=\sum_{n=1}^{\infty}a_{n}$, 
\begin{equation}
S({\cal C}) = \lim_{n\to\infty}\frac{1}{n+1}\sum_{l=0}^{n}\sum_{k=0}^{l}a_{k}
\label{(2.23)}
\end{equation}
Equation \eqref{(2.23)} corresponds to the process of averaging the partial sums and is, yet, another popular summability method. Using \eqref{(2.21)}, the reader may verify that
\begin{equation}
G({\cal C}) = \frac{1}{2}
\label{(2.24)}
\end{equation}
and it is immediate that
\begin{equation}
G({\cal E}) = \frac{1}{2}
\label{(2.25)}
\end{equation}
where ${\cal E}$ means ``Euler''. Thus, for Grandi's series, all three
summability methods, Ramanujan, Euler and Ces\`{a}ro, yield the same
sum $\frac{1}{2}$. It might be expected, therefore, that this number
$\frac{1}{2}$ is indeed attached to the series $G$ in some way,
particularly because the partial sums oscillate between one and zero
by \eqref{(2.21)}. We shall see how in the next subsection.

\subsection{A new look at divergent series: Terence Tao's method of smoothed sums}
\label{sec:2.3}

The Ramanujan method of analytic continuation, when used in connection
with the zeta function for general $s$ in \eqref{(2.3)}, yields a
generalization of \eqref{(2.13)} and \eqref{(2.14)}, namely
\begin{equation}
S_{s}({\cal R}) = \zeta_{a.c.}(-s) = -\frac{B_{s+1}}{s+1}
\label{(4.1)}
\end{equation}
where the $B_{s}$ are the Bernoulli numbers \eqref{(10.3)}, \eqref{(10.4)}; the corresponding formal series are
\begin{equation}
S_{s} \equiv \sum_{n=1}^{\infty} n^{s}
\label{(4.2)}
\end{equation}
with partial sums
\begin{equation}
S_{s}(N) = \sum_{n=1}^{N} n^{s} = \frac{1}{s+1}N^{s+1} +\frac{1}{2}N^{s}+\frac{s}{12}N^{s-1}+\cdots+B_{s}N
\label{(4.3)}
\end{equation}
(\emph{Faulhaber's formula}). For $s=0,1$, \eqref{(4.1)} yields the Ramanujan sums
\begin{equation}
\sum_{n=1}^{\infty}1 = 1+1+ \cdots = -\frac{1}{2}
\label{(4.4)}
\end{equation}
\begin{equation}
\sum_{n=1}^{\infty} n = 1+2+3+ \cdots = -\frac{1}{12}
\label{(4.5)}
\end{equation}
We complete the list with Grandi's series \eqref{(2.20)}, with partial sums \eqref{(2.21)}, and Ramanujan sum
\begin{equation}
\sum_{n=1}^{\infty} (-1)^{n-1} = 1-1+1- \cdots = \frac{1}{2}
\label{(4.6)}
\end{equation}

As remarked by Terence Tao \cite{Tao}, the partial sums $S_{s}(N)$ do not ressemble the r.h.s. of \eqref{(4.1)}, e.g., for $s=0,1$,
\begin{equation}
S_{0}(N) = \sum_{n=1}^{N} 1 = N
\label{(4.7)}
\end{equation}
\begin{equation}
S_{1}(N) = \sum_{n=1}^{N} n = \frac{1}{2}N^{2} + \frac{1}{2}N
\label{(4.8)}
\end{equation}

Tao's main observation is that, if $N$ is viewed as a real number in
\eqref{(4.3)}, $S_{s}(N)$ has jump discontinuities at each positive
integer $N$, which play a crucial role in its asymptotic behavior with
$N$; he proposes to replace $S_{s}(N)$ by \emph{smoothed sums}

\begin{equation}
S_{s}(\eta_{N}) \equiv \sum_{n=1}^{\infty} \eta\left(\frac{n}{N}\right) n^{s}
\label{(4.9)}
\end{equation}
where 
\begin{equation}
\eta_{N}(x) \equiv \eta\left(\frac{x}{N}\right)
\label{(4.9a)}
\end{equation}
Above, $\eta: \mathbf{R}_{+} \to \mathbf{R}$, the \emph{cutoff function}, is a bounded function of compact support, which we take, without loss of generality, to be the interval $[0,1]$, i.e., $\eta(x)=0 \mbox{ if } x \notin [0,1]$, and such that
\begin{equation}
\eta(0+) = 1
\label{(4.10)}
\end{equation}
The values of $\eta$ on the negative real axis are of no concern. We
assume, for simplicity, that $\eta$ is smooth (infinitely
differentiable) on the set $(0,1]$: this means that the function and
all its derivatives at $x=1$ are zero. The latter behavior contrasts
with that of the characteristic function $\chi$ of $[0,1]$, which is
defined to equal one if $x \in [0,1]$, and zero otherwise. If $\eta$
is replaced by $\chi$, a finite (step) discontinuity at $x=1$ takes
place, and in \eqref{(4.9)} we recover the traditional partial sums
\eqref{(4.3)}.

We now review the simplest case treated by Tao, that of Grandi's
series \eqref{(4.6)}, for which the corresponding smoothed sum
\eqref{(4.9)} may be written by regrouping the summands as
\begin{eqnarray*}
G(\eta_{N}) = \sum_{n=1}^{\infty} \eta\left(\frac{n}{N}\right) (-1)^{n-1} = \\
= \frac{\eta(\frac{1}{N})}{2} + \sum_{n=1}^{\infty} \frac{[\eta(\frac{(2m-1)}{N})-2\eta(\frac{2m}{N})+\eta(\frac{(2m-1)}{N})]}{2}
\end{eqnarray*}
If $\eta$ is twice continuously differentiable, it follows from the
Taylor expansion that each summand in $G(\eta_{N})$ is
$O(\frac{1}{N^{2}})$ (because it is a double difference). From the
compact support of $\eta$ the number of terms in the infinite sum in
$G(\eta_{N})$ is finite and $O(N)$. Thus, the sum occurring in the
expression for $G(\eta_{N})$ is $O(\frac{1}{N})$ and, due to
\eqref{(4.10)}, the first term in the expression of $G(\eta_{N})$ is
$O(\frac{1}{N})$, too. Hence,
\begin{equation}
G(\eta_{N}) = \frac{1}{2} + O\left(\frac{1}{N}\right)
\label{(4.11)}
\end{equation}
The common Ramanujan, Euler and Ces\`{a}ro sum of $G$ - namely, one
half- appears therefore as leading term in the smoothed asymptotics.

The above method may be applied to $S_{s}(\eta_{N})$, defined by \eqref{(4.9)}, yielding \cite{Tao}
\begin{equation}
S_{s}(\eta_{N}) = -\frac{B_{s+1}}{s+1} + C_{\eta_{N},s}N^{s+1} + O\left(\frac{1}{N}\right)
\label{(4.12)}
\end{equation}
for any fixed $s=0,1,2, \cdots$, where
\begin{equation}
C_{\eta,s} \equiv \int_{0}^{\infty} x^{s} \eta(x) dx
\label{(4.13)}
\end{equation}
is the Mellin transform of $\eta$ (see, e.g., \cite{Hoc}, p. 80, Chap. 3). By \eqref{(4.9a)},
\begin{equation}
C_{\eta_{N},s} = C_{\eta,s}N^{s+1}
\label{(4.13a)}
\end{equation}
By \eqref{(4.12)} we obtain, for $s=0$ and $s=1$ the results
\begin{equation}
  S_{0}(\eta_{N}) = \sum_{n=1}^{\infty} \eta\left(\frac{n}{N}\right)
  = -\frac{1}{2} + C_{\eta_{N},0}N + O\left(\frac{1}{N}\right)
\label{(4.14)}
\end{equation}
\begin{equation}
  S_{1}(\eta_{N}) = \sum_{n=1}^{\infty} n \eta\left(\frac{n}{N}\right)
  = -\frac{1}{12} + C_{\eta_{N},1}N^{2} + O\left(\frac{1}{N}\right)
\label{(4.15)}
\end{equation}
Equations \eqref{(4.14)} and \eqref{(4.15)} should be compared with
\eqref{(2.4)} and \eqref{(2.5)}, respectively. They show that
Ramanujan sums $\frac{-1}{2}$ and $\frac{-1}{12}$ appear as the
leading terms of asymptotic expansions in $N$ of $S_{0}(\eta_{N})$ and
$S_{1}(\eta_{N})$, respectively, for large $N$, which, as remarked by
Tao, do not contradict the positivity of the summands on the l.h.s. of
\eqref{(4.14)} and \eqref{(4.15)}, because the $C_{\eta_{N},s}$ terms
are, by \eqref{(4.13a)}, both strictly positive and dominate for large
$N$. In particular, the inconsistencies pointed out in section 2.1
connected with term-by-term summation and scalar multiplication, in
the conventional treatment, disappear when definition \eqref{(4.9)} of
smoothed partial sums is adopted. The simplest example is that of the
scalar multiplication, take the first (innocent-looking) equation of
\eqref{(2.16)}:
$$
2+4+6+ \cdots = 2(1+2+3+ \cdots)
$$
This is no longer true for smoothed  sums. For,
$$
\sum_{n=1}^{\infty} 2n \eta\left(\frac{2n}{N}\right) \neq 2
\sum_{n=1}^{\infty} n \eta\left(\frac{n}{N}\right)
$$
Similarly, term-by-term summation yields no contradictions, because of
the necessary associated shifts in the arguments of the function
$\eta$.

We now come back to \eqref{(4.12)}. This formula was proved by Tao as
a corollary of his very elegant and simple proof of the
Euler-Maclaurin formula. Let $f$ satisfy the following assumption:

\emph{Assumption A} f is a real valued function on $(0,\infty)$, at
least continuously differentiable up to order $(s+2)$, vanishing,
together with all its derivatives up to order $(s+1)$ at $N$.

Let 
\begin{equation}
\tilde{g}(N) \equiv \int_{0}^{N} f(x)dx - \frac{1}{2}f(0) - \sum_{n=1}^{N}f(n)
\label{(4.16)}
\end{equation}
Then,
\begin{equation}
\tilde{g}(N) = \sum_{k=2}^{s+1} \frac{B_{k}}{k!} f^{(k-1)}(0) + O(N||f||_{C^{s+2}})
\label{(4.17)}
\end{equation}
where $s=1,2, \cdots$, and
\begin{equation}
||f||_{C^{k}} \equiv \sup_{x\in \mathbf{R}}|f^{(k)}(x)|
\label{(4.18)}
\end{equation}
Applying \eqref{(4.17)} with 
\begin{equation}
f(x) = x^{s} \eta(\frac{x}{N})
\label{(4.19)}
\end{equation}
we see from \eqref{(4.16)}, the conditions on $\eta$ stated between
equations \eqref{(4.9)} and \eqref{(4.10)}, and \eqref{(4.13a)} that
the l.h.s. of \eqref{(4.17)} equals $C_{\eta,s}N^{s+1}$. It may be
easily checked that all terms in the sum in \eqref{(4.17)} vanish,
except for the $k=s+1$ term, which is $\frac{B_{s+1}}{s+1}$. Applying
the product rule and the chain rule to definition \eqref{(4.18)}, and
using that $\eta$ is supported on the interval $[0,1]$, we also check
that
\begin{equation}
||f||_{C^{s+2}} = O\left(\frac{1}{N^{2}}\right)
\label{(4.20)}
\end{equation}
from which \eqref{(4.12)} follows. From \eqref{(4.12)},
\begin{equation}
\lim_{N \to \infty} [S_{s}(\eta_{N}) - C_{\eta,s} N^{s+1}] = -\frac{B_{s+1}}{s+1}
\label{(4.21)}
\end{equation}
One should compare \eqref{(4.21)} with the result which follows from Faulhaber's formula \eqref{(4.3)}, namely, $S_{s}(N)-\frac{N^{s+1}}{s+1} \to \infty$. This is because when $\eta$ is replaced by the characteristic function of $[0,1]$, Assumption A does not hold, due to the boundary terms at $N$ (see exercise 5 in \cite{Tao}).

\section{Application of Tao's method to the Casimir effect for perfectly conducting parallel plates}
\label{sec:3}

In this section, we demonstrate the applicability of Tao's method to
an important effect in nonperturbative qed (in the sense of being presumably exact, although
this will be reviewed), the Casimir effect
(\cite{IZ}, \cite{Mil}, \cite{Mi}. For a review, see \cite{Plu}, for
recent advances see \cite{BKMM}, and for a pedagogical treatment,
\cite{Far}. We shall follow \cite{MaB}, see also \cite{Mil},
sec. 2.7. Only the very simplest case, that of perfectly conducting
parallel plates, will be considered.

Consider an empty coboidic box $\Lambda$ with perfectly conducting
boundaries. It has thickness $d$ and lateral sizes of surface $L^{2}$:
$$
\Lambda = \{\vec{x}=(x,y,z)|0\le x\le d, -L/2 \le y \le L/2, -L/2 \le z \le L/2 \}
$$
The electric field in the box is the solution of Maxwell's equations, with proper boundary conditions (see \cite{MaB}, and \cite{Mil}, section 2.7). It yields a classical energy, whose quantization provides the expression for the total Hamiltonian $H_{\Lambda}$:
\begin{eqnarray*}
H_{\Lambda} = \sum_{\vec{k},\lambda,'}\hbar \omega_{\vec{k}}a^{\dag}_{\vec{k},\lambda}a_{\vec{k},\lambda}+\\
+\frac{1}{2}\sum_{\vec{k},\lambda,'}\hbar \omega_{\vec{k}}
\end{eqnarray*}
Above, the modes $(\vec{k},\lambda)$ are described by a wave-number
$\vec{k}=(k_{x},k_{y},k_{z})$ and a polarization index $\lambda=1,2$,
$a^{\dag}(\vec{k}),\lambda, a(\vec{k},\lambda)$ are creation and
annihilation operators satisfying commutation relations
$[a(\vec{k},\lambda),a^{\dag}(\vec{k^{'}},\lambda^{'})]=
\delta_{\vec{k},\vec{k^{'}}}\delta_{\lambda,\lambda^{'}}$,
$\omega_{\vec{k}}=c|\vec{k}|$ is the photon energy and the prime in
$(\vec{k}, \lambda, ')$ means that only one polarization is possible
when one of the wave-numbers $(k_{x},k_{y},k_{z})$ equals zero. The
wave-numbers $k_{x},k_{y},k_{z}$ and the eigenfrequencies are
\begin{eqnarray*}
k_{x}=\frac{\pi n_{x}}{d}, k_{y}=\frac{\pi n_{y}}{L}, k_{z}=\frac{\pi n_{z}}{L}, n_{x}, n_{y}, n_{z}=0,1,2, \cdots ;\\
\omega_{\vec{k}} = c|\vec{k}| = c\sqrt{\left(\frac{\pi n_{x}}{d}\right)^{2}+\left(\frac{\pi n_{y}}{L}\right)^{2}+\left(\frac{\pi n_{z}}{L}\right)^{2}}
\end{eqnarray*}
where $c$ denotes the speed of light.
The boundary condition on the electric field, i.e., that its
tangential components vanish on the boundary of $\Lambda$, lead to the
property of its $y$ and $z$ components to be proportional to
$\sin(k_{x}x)$, with $k_{x}$ as above. The (infinite) last term in the
Hamiltonian represents the zero-point energy of the electromagnetic
field. As in \cite{MaB}, we adopt Casimir's point of view that the
vacuum energy
\begin{eqnarray*}
\Sigma_{\Lambda} = (0|\frac{1}{8\pi}\int_{\Lambda} dx (|\vec{E}(\vec{x})|^{2}+|\vec{B}(\vec{x})|^{2})|0)= \\
=\frac{1}{2}\sum_{\vec{k},\lambda,'}\hbar \omega_{\vec{k}} 
\end{eqnarray*}
represents mean square fluctuations of the fields in the box $\Lambda$
that exist even in the absence of photons but may produce physically
observable effects because they depend on the geometry (shape, size)
of the spatial domain containing the field. Accordingly, let
$f(d) \equiv \frac{F(d)}{L^{2}}$ denote the force per unit surface
induced by these vacuum fluctuations between two faces of the metallic
box at distance $d$. Since a real metal is characterized by a
frequency-dependent dielectric function $\epsilon(\omega)$ such that
$\epsilon(\omega) \to \infty$ as $\omega \to 0$, but which tends to
the vacuum value $\epsilon_{0}$ as $\omega \to \infty$, namely, when
$\omega \gg \omega_{a}$, with $\omega_{a}$ a characteristic atomic
frequency, high enough frequencies should not contribute to the
force. For this reason, we introduce a cutoff function
$g(\omega/\omega_{a})$ in the above formula for the zero point energy,
such that
\begin{equation}
g(0)=1
\label{(5.1)}
\end{equation}
as well as
\begin{equation}
g\left(\frac{\omega}{\omega_{a}}\right) \to 0 \mbox{ as } \omega \to \infty
\label{(5.2)}
\end{equation}
This means to consider $\Sigma_{\Lambda}$ above as given by
\begin{equation}
\Sigma_{\Lambda} = \frac{1}{2} \sum_{\vec{k},\lambda,'}\hbar \omega_{\vec{k}} g\left(\frac{\omega_{k}}{\omega_{a}}\right)
\label{(5.3)}
\end{equation}
The cutoff function will be removed at the end by letting $\omega_{a} \to \infty$.

Extending the plates to infinity in the $y,z$ directions yields the energy per unit surface
\begin{eqnarray*}
u_{d} = \lim_{L \to \infty} \frac{\Sigma_{\Lambda}}{L^{2}} = 2 \frac{1}{\pi^{2}}\int_{0}^{\infty}dk_{y}\int_{0}^{\infty} dk_{z}\\
(\frac{1}{2}\sum_{0}^{\infty,'} \hbar \omega_{n}(\sqrt(k_{y}^{2}+k_{z}^{2})))g(\frac{\omega_{n}(\sqrt(k_{y}^{2}+k_{z}^{2}))}{\omega_{a}})=\\
= 2 \frac{1}{4\pi}\sum_{n=0}^{\infty,'}\int_{0}^{\infty} dq q \hbar \omega_{n}(q)g\left(\frac{\omega_{n}(q)}{\omega_{a}}\right)
\end{eqnarray*}
where
$$
\omega_{n}(q) = c\sqrt((\frac{\pi n}{d})^{2}+q^{2})
$$
Above, $\vec{q}$ is the two-dimensional wave-number vector in the
$(y,z)$-plane, $q=|\vec{q}|$, and we have introduced polar coordinates
in the $(y,z)$ plane, with angular sector $\frac{2\pi}{4}$. The
prefactor $2$ in $u_{d}$ is due to the two polarization states of the
photon an the prime in the sum means that the term $n=0$ must have an
additional factor $\frac{1}{2}$. With the change of variable
$q \to \omega=\omega_{n}(q), \omega d\omega= c^{2}q dq$, $u_{d}$ may
be written finally
$$
u_{d} = \frac{1}{4\pi c^{2}}\sum_{n=0}^{\infty,'}\int_{\frac{c\pi n}{d}}^{\infty}d\omega \omega \hbar \omega g(\frac{\omega}{\omega_{a}}) 
$$
and therefore
\begin{equation}
u_{d}=\frac{\hbar c \pi^{2}}{2d^{3}}\sum_{n=0}^{\infty,'}F(n)
\label{(5.4)}
\end{equation}
where
\begin{equation}
F(s) \equiv \int_{s}^{\infty} dv v^{2} g\left(\frac{\pi cv}{d\omega_{a}}\right)
\label{(5.5)}
\end{equation}
A possibility for a cutoff function $g$ satisfying \eqref{(5.1)} and \eqref{(5.2)} is to consider, for each finite integer $N$ (this is
no restriction, because one may always consider for any given sequence of numbers $\alpha_{i}, i=1, \cdots$ the largest 
integers $N_{i}\le \alpha_{i}, i=1, \cdots$) the function $g(x)=g_{N}(x) = \chi(\frac{x}{N})$,
where $\chi$ denotes the characteristic function of the interval $[0,1]$, i.e., $\chi(x)=1$ if $x \in [0,1]$ and $\chi(x)=0$ otherwise.
Recalling, however, the problems mentioned after \eqref{(4.21)}, we are led to adopt the following \emph{definition} for $g$ instead:
\begin{equation}
g_{N}(x) \equiv \eta_{N}(x)
\label{(5.6)}
\end{equation}
with $\eta_{N}$ defined as in \eqref{(4.9a)}, \eqref{(4.10)}.
Since $\frac{\omega_{n}(q)}{\omega_{a}} \le N$ implies $\frac{c \pi n}{d} \le N$, we may write \eqref{(5.4)} and \eqref{(5.5)} in the form
\begin{equation}
u_{d} = \frac{\hbar c\pi^{2}}{2d^{3}}\lim_{N \to \infty} \sum_{n=0}^{N,'} F_{N}(n)
\label{(5.7)}
\end{equation}
where
\begin{equation}
F_{N}(n) \equiv \int_{n}^{\infty} dv v^{2} g_{N}\left(\frac{\pi cv}{d\omega_{a}}\right)
\label{(5.8)}
\end{equation}
The force between the plates, assuming that the electromagnetic field is
entirely enclosed in the cavity, is
$f_{d} = -\frac{\partial}{\partial d}u_{d}$, with $u_{d}$ given by
\eqref{(5.4)} and $F$ replaced by $F_{N}$. There is, however, also a field in the space external
to the cavity: the external face of the plate at $d$ will be subject
to a force in the opposite direction due to the vacuum fluctuations in
the semi-infinite space to its right, namely
\begin{equation}
f_{ext} = -\left(- \lim_{d \to \infty} \frac{\partial}{\partial d}u_{d}\right)
\label{(5.9)}
\end{equation}
which yields
\begin{equation}
u_{d}^{ext} \approx
\frac{d}{2\pi^{2}c^{3}}\int_{0}^{\infty}dv\int_{v}^{\infty}d\omega
\omega (\hbar \omega)
g\left(\frac{\omega}{\omega_{a}}\right)
\label{(5.10)}
\end{equation}
By the formula for $u_{d}$ just preceding \eqref{(5.4)}, the corresponding energy per unit surface is
\begin{equation}
u_{d}^{ext} \approx \frac{\hbar c\pi^{2}}{2d^{3}}\int_{0}^{\infty}ds F_{N}(s)
\label{(5.11)}
\end{equation}
as $d \to \infty$, up to negligible corrections which are
discarded. Note that this Ansatz amounts to normalize the force in
such a way that a single plate in infinite space feels no resulting
force. Define, thus, the total energy $u^{t}$ per unit surface by
\begin{equation}
u^{t} = u_{d} - u_{d}^{ext}
\label{(5.12)}
\end{equation}  
By \eqref{(5.4)} and \eqref{(5.10)}, it is given by
\begin{equation}
u^{t} = \frac{\pi^{2}\hbar c}{2d^{3}}\lim_{N \to \infty}u_{N}^{t}
\label{(5.13a)}
\end{equation}
where
\begin{equation}
u_{N}^{t} \equiv \sum_{n=1}^{\infty}F_{N}(n)+\frac{1}{2}F_{N}(0)-\int_{0}^{\infty}F_{N}(s)
\label{(5.13b)}
\end{equation}
Our main result is the following theorem, which is an application of Tao's method to a less trivial case then \eqref{(4.9)}, i.e., in which the relevant function $f$ is not a polynomial:

\begin{theorem}
\label{thm:2}
\begin{equation}
\label{(5.14)}
u^{t} = \frac{-2\pi^{2}\hbar c}{2d^{3}}\frac{B_{4}}{4!} = \frac{-\pi^{2}\hbar c}{720 d^{3}}
\end{equation}
where $B_{4}$ is a Bernoulli number \eqref{(10.5)}. 
\end{theorem}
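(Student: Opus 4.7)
The plan is to identify $u_N^t$ with Tao's quantity $-\tilde{g}(M)$ for a suitable endpoint $M$, apply the Euler--Maclaurin formula \eqref{(4.17)} with $f = F_N$, and check that the boundary derivatives at $0$ produce exactly one nonzero term, while the remainder vanishes in the limit $N \to \infty$.

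First I would exploit the compact support of the cutoff. Since $\eta$ is supported in $[0,1]$ and vanishes with all derivatives at $1$, the function $F_N(s)$ in \eqref{(5.8)} and every derivative of $F_N$ vanish for $s \ge M$, where $M := N d \omega_a/(\pi c)$. Consequently $\sum_{n=1}^{\infty} F_N(n) = \sum_{n=1}^{M} F_N(n)$ and $\int_0^\infty F_N(s)\,ds = \int_0^M F_N(s)\,ds$, which by comparison with \eqref{(4.16)} yields
\begin{equation*}
u_N^t \;=\; -\,\tilde{g}(M)\qquad\text{with}\qquad f = F_N.
\end{equation*}
Assumption A of subsection~\ref{sec:2.3} is satisfied: $F_N$ is as smooth as $\eta$, and all derivatives of $F_N$ of any order vanish at $M$ by the flatness of $\eta$ at $1$.

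Next I would apply \eqref{(4.17)} with Tao's parameter $s=3$ (so $k=2,3,4$) and compute the boundary terms $F_N^{(k-1)}(0)$. From the Leibniz rule applied to $F_N'(s) = -s^2\,\eta(\alpha s)$, with $\alpha := \pi c/(dN\omega_a)$, the factor $s^2$ forces $F_N'(0) = F_N''(0) = 0$ and $F_N'''(0) = -2\,\eta(0+) = -2$ by \eqref{(4.10)}. Since $B_3 = 0$, only the $k=4$ contribution survives:
\begin{equation*}
\tilde{g}(M) \;=\; \frac{B_4}{4!}\,F_N'''(0) \;+\; O\!\bigl(M\,\|F_N\|_{C^{5}}\bigr) \;=\; -\,\frac{2B_4}{4!} \;+\; O\!\bigl(M\,\|F_N\|_{C^{5}}\bigr).
\end{equation*}

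The main technical point is to show the error term tends to zero. Using the Leibniz rule once more, $F_N^{(5)}(s)$ is a sum of terms of the form $c_j\,(s^2)^{(j)}\,\alpha^{4-j}\eta^{(4-j)}(\alpha s)$ for $j \in \{0,1,2\}$. On the support $\alpha s \in [0,1]$ one has $s^j \alpha^j \le 1$, so each term is bounded by a constant times $\alpha^2 = O(1/N^2)$, whence $\|F_N\|_{C^5} = O(1/N^2)$. Since $M = O(N)$, the Tao remainder is $M\,\|F_N\|_{C^5} = O(1/N) \to 0$. Putting this into $u_N^t = -\tilde{g}(M)$ and inserting the prefactor in \eqref{(5.13a)} gives
\begin{equation*}
u^t \;=\; \frac{\pi^2 \hbar c}{2 d^{3}}\,\lim_{N\to\infty} u_N^t \;=\; \frac{\pi^2 \hbar c}{2 d^{3}}\,\frac{2 B_4}{4!} \;=\; -\,\frac{\pi^2 \hbar c}{720\,d^{3}},
\end{equation*}
which is the claim. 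The delicate step, and the one which distinguishes this argument from the classical Euler--Maclaurin derivation, is precisely the verification of Assumption A at the upper endpoint $M$: this is what replaces the diverging ``residual infinity'' of the characteristic-function cutoff by a controlled $O(1/N)$ remainder, and it crucially relies on the smooth flatness of $\eta$ at $1$ discussed after \eqref{(4.10)}.
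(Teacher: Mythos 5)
Your argument follows essentially the same route as the paper's proof: identify $u_{N}^{t}$ with $-\tilde g$ from \eqref{(4.16)}, apply Tao's Euler--Maclaurin formula \eqref{(4.17)} to $f=F_{N}$ with $s=3$, observe that $F_{N}^{(1)}(0)=F_{N}^{(2)}(0)=0$ leaves only the $B_{4}$ term with $F_{N}^{(3)}(0)=-2\eta(0+)=-2$, and show that the remainder is $O(1/N)$ because $\|F_{N}\|_{C^{5}}=O(1/N^{2})$. The only difference is bookkeeping: you make explicit the effective endpoint $M=Nd\omega_{a}/(\pi c)$ at which $F_{N}$ and all its derivatives vanish so that Assumption~A literally applies, a point the paper's text leaves implicit (and where \eqref{(5.16)}--\eqref{(5.18)} contain small notational slips), but the mechanism and conclusion are the same.
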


\begin{proof}

Let $\lambda \equiv \frac{\pi c}{d\omega_{a}}$ By \eqref{(5.7)},
\begin{eqnarray*}
F_{N}^{(1)}(s) = -s^{2}g_{N}(\lambda s)\\
F_{N}^{(2)}(s) = -2sg_{N}(\lambda s)-\lambda s^{2}g^{(1)}(\lambda s)\\
F_{N}^{(3)}(s) = -2\eta(0)-4s\lambda g_{N}^{(1)}(\lambda s)-\lambda^{2}s^{2}g_{N}^{(2)}(\lambda s)\\
F_{N}^{(4)}(s) = -4\lambda g_{N}^{(1)}(\lambda s)-6s\lambda^{2}g_{N}^{(2)}(\lambda s)-\lambda^{3}s^{2}g_{N}^{(3)}(\lambda s)\\
F_{N}^{(5)}(s) = -10\lambda^{2}g_{N}^{(2)}(\lambda s)-8s\lambda^{3}g_{N}^{(3)}(\lambda s)-\lambda^{4}s^{2}g_{N}^{(4)}(\lambda s)
\end{eqnarray*}

Above, as before, the superscripts denote the order of the derivatives. From the above formulas, we find that $F_{N}^{(1)}(0)=0$, $F_{N}^{(2)}(0)=0$. Inserting these results into \eqref{(4.16)}, \eqref{(4.17)} and \eqref{(4.18)}, and taking into account \eqref{(5.13a)} and \eqref{(5.13b)}, we obtain
\begin{equation}
u_{N}^{t} = -\frac{B_{4}}{4!}F_{N}^{(3)}(0) + O(N||F_{N}||_{C^{5}})
\label{(5.15)}
\end{equation}
where, from the above explicit formula for $F_{N}^{(5)}$,
\begin{equation}
||F_{N}||_{C^{5}} \equiv \sup_{x\in \mathbf{R}_{+}} |F_{N}^{(5)}(x)| \le 10 \lambda^{2} \frac{\lambda}{N^{2}}+O\left(\frac{1}{N^{3}}\right)
\label{(5.16)}
\end{equation} 
The explicit formula for $F_{N}^{(3)}$ yields
\begin{equation}
F_{N}^{3}(0) = -2\eta(0+) 
\label{(5.17)}
\end{equation}
Inserting \eqref{(5.16)} and \eqref{(5.17)} into \eqref{(5.15)}, and taking into account the normalization \eqref{(4.10)}, we obtain
\begin{equation}
u_{N}^{t} = \frac{-\pi^{2}\hbar c}{720 d^{3}} + O\left(\frac{1}{N}\right)
\label{(5.18)}
\end{equation}
from which, together with \eqref{(5.13a)}, \eqref{(5.14)} follows.

\end{proof}

We have shown that Tao's method of smoothed sums yields the correct
formula \eqref{(5.14)} for the total energy density for the simplest
Casimir effect, of perfectly conducting parallel plates. Most proofs
of this effect use the Euler-MacLaurin formula, and write
\eqref{(5.14)} in the form
\begin{equation}
u^{t} = -\frac{\pi^{2}\hbar c}{720 d^{3}} + O(d^{-4})
\label{(6.1)}
\end{equation}
(see, e.g., \cite{Sp}, p.171, (13.121), for a recent reference). If
$u^{t}$ were a \emph{known} function, the r.h.s of \eqref{(6.1)} would
be its usual asymptotic expansion, which would determine it with any
degree of precision. If, however, nothing is known about the l.h.s. -
as is the case with the Casimir effect - the r.h.s. of \eqref{(6.1)}
must be taken as the \emph{definition} of $u^{t}$, and then Wightman's
Observation A applies. Indeed, by this definition, $u^{t}=+\infty$,
since the $O(d^{-4})$ term in \eqref{(6.1)} indicates that one is
supposed to sum the series, which, however, diverges, whatever
(nonzero) value of the small parameter is filled in. Due to
\eqref{(4.17)} and \eqref{(5.1)}, every finite approximation to
$u^{t}$ is independent of the cutoff $g$, but the rest diverges as
$N \to \infty$. Taking a sufficiently large number of terms, even the
sign of $u^{t}$ eventually changes from negative to positive, in
analogy to \eqref{(4.14)} and \eqref{(4.15)}.

It is to be remarked that the above-mentioned ``residual divergence'' is not removed by any process of renormalization, and is, in this respect, quite analogous to the situation in perturbative qed \eqref{(1)}, which refers to the \emph{renormalized} perturbation series (for the gyromagnetic ratio of the electron). It is present in all approaches which use the Euler-MacLaurin series. This does not, of course, mean that these approaches are ``wrong'': it means that they are not \emph{mathematically precise}, the issue being one of striving towards a higher level of understanding.

We should like to expand slightly on this important issue, because it touches on the philosophy of science. As Jaffe observes in \cite{Jaf}, lesson III, p.7: ``Arthur (Wightman) insisted: A great physical theory is not mature until it has been put in precise mathematical form''. As discussed in section 2, perturbative qed is also, in a similar way, not mathematically precise, in spite of remaining one of the greatest successes of physics, but, as Lieb observes in \cite{Lieb}, ``it is as much an enigma as it is a success''.  One important point in this connection is that perturbation theory provides a wrong picture of the photon cloud which surrounds an electron, see \cite{Lieb} and references given there, as well as \cite{LLrel} and \cite{JaWre2}. 

What can be said, in analogy, about the Casimir effect? It is, certainly one of the very few \emph{nonperturbative} (in the sense of being presumably exact) effects of qed. As remarked in \cite{Sp}, p. 170, together with blackbody radiation, it provides the most direct (experimental) evidence for the quantum nature of the Maxwell field (to which one might add the phenomenon of spontaneous emission, see \cite{JJS}). On the theoretical side, there are strong \emph{conceptual} arguments which require that the electromagnetic field be quantized \cite{BR}. In this same paper, Bohr and Rosenfeld point out that the square of the fields at a single point, such as in the first expression for the vacuum energy $\Sigma_{\Lambda}$, are ill-defined. In fact, as discussed in \cite{JJS}, p. 33, what we measure by a test body is the field strength averaged over some small region about a point: fields are what is termed operator-valued distributions, a notion which is basic to the axiomatic (or general) theory of quantized fields, according to which only the so-called Wick dots $:(\vec{E}(\vec{x}))^{2}:$ exist \cite{Wig3}. Building on this notion, a few different approaches to the Casimir effect, which do not use the Euler-MacLaurin series (see \cite{SW}, \cite{BrMac}, \cite{Niek}), arrive at the same result independently. Of particular interest is the paper \cite{BrMac}, which uses the image method in a field theoretic context and arrives at \eqref{(5.14)} by summing a convergent series. These different conceptual formulations are free of infinities, even of these mentioned in connection with \eqref{(6.1)}, but they are rather special, in contrast to Tao's formulation, which is quite general: there, the would-be rest in \eqref{(6.1)} disappears in the limit $N \to \infty$ (see Theorem 3.1).

In conclusion, the previously mentioned references, as well as Theorem ~\ref{thm:2}, are mathematically precise statements of the Casimir effect. In different ways, they introduce a ``smoothing'', which has its roots in the previously discussed singular nature of the quantum fields. This ``smoothing'' is familiar from distribution theory, which is one of the basic mathematical pillars of classical mathematical physics \cite{Sch}. Coming back to sequences and series of functions, as in section 2, consider the sequence of infinitely differentiable functions $f_{j}(x) \equiv \sin(jx), j=1.2, \cdots $ It certainly has no limit in the sense of functions, but let $\phi$ be an infinitely differentiable function, equal to zero outside the set $[-\pi,\pi]$. A partial integration shows that
$$
(T_{j},\phi) \equiv \int_{-\pi}^{\pi} \sin(jx) \phi(x) dx = \frac{1}{j} \int_{-\pi}^{\pi} \cos(jx) \phi^{'}(x) dx
$$
where the prime indicates differentiation. Thus, $\lim_{j\to \infty} (T_{j},\phi)= 0$. Similarly (see, e.g., \cite{BB}, Chap. 3, p.36), it may be shown that the series
\begin{equation}
\delta_{j}(x) \equiv \frac{1}{2\pi}\sum_{k=-j}^{k=j} \exp(ikx) = \frac{\sin((j+1/2)x)}{\sin(x/2)}
\label{(6.2)}
\end{equation}
is such that
\begin{equation}
\label{(6.3)}
\lim_{j\to \infty} (\delta_{j},\phi) = \lim_{j\to \infty}\int_{-\pi}^{\pi} dx \delta_{j}(x) \phi(x) = \phi(0)
\end{equation}
that is, ``$\delta_{j}$'' is a ``delta-sequence'': the r.h.s of \eqref{(6.2)} becomes more and more concentrated around the point $x=0$ as $j$ grows large, but the ``distributional limit'' \eqref{(6.3)} does exist. That is, a smoothing around the singular point $x=0$ enables the limit to exist. In analogy, with \eqref{(5.15)} and \eqref{(5.17)},
\begin{equation}
\lim_{N \to \infty} u_{N}^{t} = \mbox{ const. } \eta(0+)
\label{(6.4)}
\end{equation}
This means that a smoothing of the ``steps'' at each integer $N$ also enables the limit $N \to \infty$ in Theorem 3.1 to exist. Since the derivative of the step function is a ``delta function'' in the sense of distributions (see, e.g. \cite{Sch}, p. 82), the two notions are related.

We now come to more general nonperturbative approaches, in which the singular nature of quantum fields also plays a major role.

\section{General aspects of nonperturbative quantum fields: Wightman axioms for interacting quantum fields, dressed particles in a charged sector and unstable particles}

We shall try to keep with the general objective of remaining at the theoretical physicist's mathematical level (\cite{Barton1}, \cite{Weinb1}). The theory will be defined by its vacuum expectation values (VEV) or \emph{n-point Wightman functions} of a (for simplicity) scalar field $A(x)$: $W_{n}(x_{1}, \cdots, x_{n}) = (\Omega, A(x_{1}), \cdots, A(x_{n}) \Omega)$, where $\Omega$ denotes the vacuum. Because of the previously mentioned singular nature of the quantum fields $A(x)$, the $A(x)$ are not (operator-valued) functions of the space-time variables $x=(x_{0},\vec{x})$, but rather \emph{functionals}, denoted by 
$A(f)$, which may be heuristically pictured as ``smeared'' objects $\int dx f(x) A(x)$, with $f$ smooth, fast decreasing at infinity functions, taken to belong to the Schwartz space \cite{Sch}: one speaks of ``operator-valued tempered distributions'' on the Schwartz space ${\cal S}$.

In \cite{StreWight}, pp.~107-110, it is shown that if the (n-point) \emph{Wightman functions} 
satisfy 
\begin{itemize}
\item [$a.)$] the relativistic transformation law; 
\item [$b.)$] the spectral condition; 
\item [$c.)$] hermiticity; 
\item [$d.)$] local commutativity; 
\item [$e.)$] positive-definiteness, 
\end{itemize}
then they are the \emph{vacuum expectation values of a field theory} satisfying 
the so-called Wightman axioms, except, eventually,  the uniqueness of 
the vacuum  state. We refer to \cite{StreWight}  for 
an account of Wightman theory; see also \cite{Wig3}.

We shall \emph{assume} $a.)-e.)$ for the 
$n$-point functions of the observable  fields, with, in addition, the following requirement
\begin{itemize}
\item [$f.)$] interacting fields are assumed to satisfy the singularity hypothesis (the forthcoming  
Definition).
\end{itemize}

\subsection{The K\"all\'en-Lehmann representation}

In a non-perturbative framework, there exists a theory of renormalization 
of masses and fields, which ``has nothing directly to do with the presence of 
infinities'' (\cite{Weinb1}, p.~441, Sect.~10.3). We adopt a related proposal, which we formulate 
as previously mentioned, for simplicity, for a theory of a self-interacting scalar field $A$ of mass $m$ satisfying 
the Wightman axioms, assumed to be an operator-valued 
tempered distribution on the Schwartz space ${\cal S}$. 

\bigskip
The following  result is the spectral representation of the 
two-point function $W_{2}$ (\cite{Weinb1}, p. 457):
\begin{equation}
		W_{2}^{m}(x-y)= \langle \Omega,A(x)A(y)\Omega \rangle
		= \frac{1}{i} \int_{0}^{\infty} {\rm d}\rho(m_\circ^{2}) \; 
		\Delta_{+}^{m_\circ}(x-y) \; , 
		\label{(7.1)}
\end{equation}
where $\Omega$ denotes the vacuum vector, $x=(x_{0},\vec{x})$, and
	\begin{align}
		\Delta_{+}^{m_\circ}(x) =\frac{i}{2(2\pi)^{3}} 
		\int_{\mathbb{R}^{3}} {\rm d}^{3}\vec{k} \; 
		\frac{ {\rm e}^{-ix_{0}\sqrt{m_\circ^{2}+\vec{k}^{2}}
		+i\vec{x}\cdot\vec{k}}}{\sqrt{m_\circ^{2}+\vec{k}^{2}}} 
		\label{(7.2)}
	\end{align}
is the two-point function of the free scalar field of mass $m_\circ$. It is further assumed that
	\begin{equation}
		\langle \Omega,A(f)\Omega \rangle = 0 \qquad \forall  f \in {\cal S} \; . 
		\label{(7.3)}
        \end{equation}
The spectral measure $\rho$ may be further decomposed in the sum of a discrete and a continuous part
\begin{equation}
		{\rm d}\rho(m_\circ^{2}) = Z\delta(m_\circ^{2}-m^{2}) + d\sigma(m_\circ^{2}) \; , 
		\label{(7.4)}
\end{equation}
where $\sigma$ is a continuous measure and
	\begin{equation}
		0 \le Z < \infty
		\label{(7.5)}
\end{equation}

Of course, if $Z=0$ in \eqref{(7.4)}, there is no discrete component of mass $m$ in the total 
mass spectrum of the theory. The positivity of $Z$ is due to the positive-definiteness conditions e.) 
(or the positive-definite Hilbert space metric). See \cite{Barton1}, p. 50, or (\cite{Weinb1}, p.~461, Equ.~(10.7.20)), 
where it is shown that $Z$  is the (non-perturbative) wave function renormalization constant.

\subsection{The Singularity Hypothesis}

One of the most important features of relativistic quantum field theory 
is the behaviour of the theory at large momenta (or large energies). Renormalization group theory \cite{Weinb2}
has contributed a significant lore to this issue (even if none of it has been made entirely rigorous): it strongly suggests
that the light-cone singularity of the two-point functions
of interacting theories is stronger than that of a free theory: this is expected even in asymptotically free
quantum chromodynamics, where the critical exponents are anomalous. We refer to this as the ``singularity hypothesis'',
which will be precisely stated in the next section. This hypothesis is also verified in each order of perturbation theory
(if the interaction density has engineering dimension larger than 2).

\subsubsection{Steinmann Scaling Degree and a theorem}

In order to formulate the singularity hypothesis in rigorous terms, the 
Steinmann scaling degree $sd$ of a distribution \cite{Steinmann} is a natural concept: for 
a distribution $u \in {\cal S}^{'}(\mathbb{R}^{n})$, let $u_{\lambda}$ denote 
the ``scaled distribution'', defined by 
	\[
		u_{\lambda}(f) \equiv u(f(\lambda^{-1} \cdot)) \; . 
	\]
As $\lambda \to 0$, we expect that $u_{\lambda} \approx \lambda^{-\omega}$ for 
some $\omega$, the ``degree of singularity'' of the distribution $u$. Hence, we set
	\begin{equation}
		sd(u) \equiv \inf \, \bigl\{\omega \in \mathbb{R} \mid \lim_{\lambda \to 0} 
		\lambda^{\omega} u_{\lambda} = 0 \bigr\} \; , 
		\label{(7.6)}
	\end{equation}
with the proviso that if there is no $u$ satisfying the limiting condition above, 
we set $sd(u) = \infty$. For the free scalar field of mass $m \ge 0$, it is 
straightforward to show from the explicit form of the two-point function in terms of 
modified Bessel functions that
	\begin{equation}
		sd(\Delta_{+}) = 2 \; . 
		\label{(7.7)}
	\end{equation}
(see, e.g., \cite{Barton1}, (5.15)). In \eqref{(7.7)}, and the forthcoming equations, we omit the mass superscript. 
  
\begin{definition}
\label{def:1} 
We say that the \emph{singularity hypothesis} holds for an interacting scalar field if 
	\begin{equation}
		sd(W_{+}) > 2 \; . 
		\label{(7.8)}
	\end{equation}
\end{definition}

In \cite{JaWre1}, it was proved that:

\begin{theorem}
\label{thm:3} If the total spectral mass is finite, \emph{i.e.},
	\begin{equation}
		\int_{0}^{\infty} {\rm d}\rho(a^{2}) < \infty \; , 
		\label{(7.9)}
	\end{equation}
then
	\begin{equation}
		sd(W_{+}) \le 2 \; ; 
		\label{(7.10)}
	\end{equation}
\emph{i.e.}, the scaling degree of $W_{+}$ cannot be strictly greater than 
that of a free theory, and thus, by Definition~\ref{def:1}, the singularity 
hypothesis~\eqref{(7.8)} is \emph{not} satisfied.
\end{theorem}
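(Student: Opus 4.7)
The plan is to read off the scaling degree of $W_+$ directly from the Källén--Lehmann representation \eqref{(7.1)} by a dominated-convergence argument against the finite measure $\rho$ furnished by the hypothesis \eqref{(7.9)}. Linearity of the operation $u\mapsto u_\lambda$ turns \eqref{(7.1)} into
\begin{equation*}
\lambda^{\omega}\,(W_+)_\lambda(f) \;=\; \frac{1}{i}\int_0^\infty d\rho(m_\circ^{2})\;\lambda^{\omega}\,(\Delta_+^{m_\circ})_\lambda(f)
\end{equation*}
for every $f \in \mathcal{S}(\mathbb{R}^{4})$ and every $\lambda > 0$. Since $sd(\Delta_+^{m_\circ})=2$ for every $m_\circ\ge 0$ by \eqref{(7.7)}, the integrand tends to zero pointwise in $m_\circ$ as $\lambda\to 0^{+}$ whenever $\omega > 2$; the whole theorem reduces to interchanging the limit $\lambda\to 0$ with the $d\rho$-integral.

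The legitimacy of this interchange is the content of a uniform bound
\begin{equation}
\sup_{\lambda\in(0,1],\;m\ge 0}\bigl|\lambda^{\omega}(\Delta_+^{m})_\lambda(f)\bigr| \;\le\; C(f)\;<\;\infty
\label{plan:unif}
\end{equation}
for every fixed $\omega > 2$ and $f\in\mathcal{S}(\mathbb{R}^{4})$, because the constant $C(f)$ is then a $d\rho$-integrable majorant (here \eqref{(7.9)} enters) and dominated convergence applies. I would establish \eqref{plan:unif} by substituting the explicit momentum-space expression \eqref{(7.2)} into the definition of $(\Delta_+^{m})_\lambda$ and performing the rescaling $\vec k \mapsto \lambda\vec k$; the dilation of the test function is thereby converted into a rescaling of the mass, yielding an identity of the form $(\Delta_+^{m})_\lambda = \lambda^{\alpha}\,\Delta_+^{\lambda m}$ with an exponent $\alpha$ consistent with \eqref{(7.7)}. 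The remaining task is the uniform-in-mass estimate $\sup_{m'\ge 0}|\Delta_+^{m'}(f)|\le C(f)$, which I would handle by splitting the $\vec k$-integration: in the regime of small $m'$ the bound $\sqrt{(m')^{2}+\vec k^{2}}\ge |\vec k|$ keeps the three-dimensional singularity $|\vec k|^{-1}$ integrable, while in the regime of large $m'$ the substitution $\vec k = m'\vec q$ together with the Schwartz decay of $\widehat f$ produces decay faster than any inverse power of $m'$.

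Assembling these ingredients gives
\begin{equation*}
\lim_{\lambda\to 0^{+}}\lambda^{\omega} (W_+)_\lambda(f) \;=\; \frac{1}{i}\int_0^\infty d\rho(m_\circ^{2})\lim_{\lambda\to 0^{+}}\lambda^{\omega}(\Delta_+^{m_\circ})_\lambda(f) \;=\; 0 \qquad \forall\,\omega > 2,
\end{equation*}
whence $sd(W_+)\le 2$ by \eqref{(7.6)}. I expect the main technical obstacle to be the uniform bound \eqref{plan:unif}: morally it merely encodes the mass-independence of the short-distance behaviour of a free field, but pinning down a constant $C(f)$ requires the two-regime estimate just sketched. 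The dominated-convergence conclusion is then essentially automatic and invokes the finiteness assumption \eqref{(7.9)} at exactly one step.
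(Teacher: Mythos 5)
The paper does not reproduce a proof of this theorem; it refers to the companion paper \cite{JaWre1} and states the result without a \verb|proof| environment, so there is no internal argument to compare against. That said, your reconstruction is sound and is almost certainly the argument given there. The three ingredients you identify --- the scaling identity $(\Delta_+^{m})_\lambda = \lambda^{\alpha}\,\Delta_+^{\lambda m}$ converting dilation of the test function into a rescaling of the mass, a uniform-in-mass bound $\sup_{m'\ge 0}\bigl|\Delta_+^{m'}(f)\bigr|\le C(f)$ obtained from $\sqrt{(m')^{2}+\vec{k}^{2}}\ge|\vec{k}|$ together with the Schwartz decay of $\hat{f}$ restricted to the mass hyperboloid, and dominated convergence against the finite measure $d\rho$ --- are exactly what the statement needs, and the hypothesis \eqref{(7.9)} enters precisely and only at the step you indicate. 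One minor caution that does not affect your argument: the paper's literal formula $u_\lambda(f)=u(f(\lambda^{-1}\cdot))$ is missing a Jacobian factor $\lambda^{-4}$ relative to the convention under which $sd(\Delta_+)=2$ rather than $-2$; you rightly sidestepped this by leaving the exponent $\alpha$ to be fixed by consistency with \eqref{(7.7)}, which pins it down as $\alpha=-2$ in the intended normalization, so that $\lambda^{\omega}(\Delta_+^{m})_\lambda(f)=\lambda^{\omega-2}\Delta_+^{\lambda m}(f)$ and the domination by $C(f)$ for $\lambda\in(0,1]$, $\omega>2$ is immediate.
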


\begin{corollary}
\label{cor:1}
The singularity hypothesis holds for an interacting scalar field only if 
$\int_{0}^{\infty} d\sigma(m_\circ^{2}) = \infty$. This necessary condition is independent
of the value of $0 \le Z < \infty$.
\end{corollary}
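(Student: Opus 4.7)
The proof is essentially the contrapositive of Theorem~\ref{thm:3} combined with the decomposition~\eqref{(7.4)} of the spectral measure. The plan is as follows.

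First I would take the contrapositive of Theorem~\ref{thm:3}: if the singularity hypothesis holds, meaning $sd(W_{+}) > 2$ by Definition~\ref{def:1}, then the hypothesis of Theorem~\ref{thm:3} must fail, i.e.\ the total spectral mass must be infinite,
\begin{equation*}
\int_{0}^{\infty} {\rm d}\rho(m_\circ^{2}) = \infty \; .
\end{equation*}

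Next I would substitute the K\"all\'en--Lehmann decomposition~\eqref{(7.4)}, which gives
\begin{equation*}
\int_{0}^{\infty} {\rm d}\rho(m_\circ^{2}) = Z + \int_{0}^{\infty} {\rm d}\sigma(m_\circ^{2}) \; .
\end{equation*}
By the bound~\eqref{(7.5)}, $Z$ is finite, so the left-hand side is infinite if and only if $\int_{0}^{\infty} {\rm d}\sigma(m_\circ^{2}) = \infty$. This yields the necessary condition, and because the divergence is located entirely in the continuous part $\sigma$ of the spectral measure while $Z$ contributes only a finite amount for any $0 \le Z < \infty$, the conclusion is manifestly independent of the particular (finite) value of $Z$.

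There is no real obstacle here: the substantive analytical work is already contained in Theorem~\ref{thm:3}, and the corollary is a one-line consequence of that theorem together with the trivial additivity of the measure $\rho$ under the decomposition~\eqref{(7.4)}. The only conceptual point worth emphasising in the write-up is that the result dispels the common folklore identification of the singularity hypothesis (or ``non-canonical short-distance behaviour'') with the vanishing of the wave function renormalization constant $Z$: whether $Z=0$ or $Z>0$, the singular short-distance structure of an interacting field is driven by the continuous spectral weight $\sigma$ having infinite total mass.
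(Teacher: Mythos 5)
Your argument is correct and is exactly the intended one: the corollary is the contrapositive of Theorem~\ref{thm:3}, and the decomposition~\eqref{(7.4)} together with $0\le Z<\infty$ from~\eqref{(7.5)} shows that $\int_{0}^{\infty}{\rm d}\rho=Z+\int_{0}^{\infty}{\rm d}\sigma$ is infinite iff $\int_{0}^{\infty}{\rm d}\sigma$ is infinite. The paper leaves this as an immediate consequence, and your write-up matches that reasoning precisely.
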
 

The importance of the above theorem, 
and especially of its corollary
is that it provides a mathematical foundation for the forthcoming 
interpretation of the condition
        \begin{equation}
		Z=0 \; . 
		\label{(7.11)}
	\end{equation}
In this sense it is a complement to the foundations of quantum field theory. In order to 
understand why this is so, we have to make a brief interlude.

\subsubsection{The ETCR hypothesis and its consequences for the singularity hypothesis}

For the purposes of identification with Lagrangian field theory, one may equate the $A(.)$ of
\eqref{(7.1)} with the ``bare'' scalar field $\phi_{B}$ (\cite{Weinb1}, pg. 439), whereby
        \begin{equation}
		A = \sqrt{Z} A_{phys}
		\label{(7.12)}
	\end{equation}
under the condition 
        \begin{equation}   
                Z>0
                \label{(7.13)}
        \end{equation} 
Under the same condition \eqref{(7.13)}, the assumption of 
equal time commutation relations (ETCR) 
for the physical fields may be written (in the distributional sense)
		\begin{equation}
			\left[ \frac{\partial A_{phys}(x_{0},\vec{x}\,)}{\partial x_{0}},
			A_{phys}(x_{0},\vec{y}\,) \right] = -\frac{i}{Z} \; \delta(\vec{x}-\vec{y}\,) \; . 
			\label{(7.14)}
		\end{equation}
Together with \eqref{(7.1)} and \eqref{(7.14)}, one obtains (\cite{Barton1}, (9.19), \cite{Weinb1}, (10.7.18) (suitably modified by the factor
$\frac{1}{Z}$):

	\begin{equation}
		\frac{1}{Z} = \int_{0}^{\infty} {\rm d}\rho(m_\circ^{2}) \; . 
		\label{(7.15)}
	\end{equation} 

Formula \eqref{(7.15)} has been extensively used as a heuristic guide, 
even, for instance, by the great founders of axiomatic (or general) quantum 
field theory, Wightman and Haag. Indeed, 
in \cite{Wig4}, p.~201, it is observed that ``$\int_{0}^{\infty} {\rm d}\rho(m_\circ^{2}) = \infty$ 
is what is usually meant by the statement that the field-strength renormalization 
is infinite''. This follows from \eqref{(7.15)}, with ``field-strength renormalization'' interpreted 
as $\frac{1}{Z}$. The connection with the singularity hypothesis comes 
next (\cite{Wig4}, p.~201), with the observation that, by \eqref{(7.1)}, $W_{2}$ will 
have the same singularity, as $(x-y)^{2}=0$, as does $\Delta_{+}(x-y;m^{2})$. 
As for Haag, he remarks (\cite{Haag}, p.~55):``In the renormalized perturbation 
expansion one relates formally 
the true field $A_{phys}$ to the canonical field $A$ (our notation) which satisfies 
\eqref{(7.12)}, where $Z$ is a constant (in fact, zero). This means that the fields 
in an interacting theory are more singular objects than in the free theory, and we 
do not have the ETCR.'' 
Both assertions seems to substantiate the 
conjecture that $Z=0$ is expected to be a \emph{general} condition for interacting 
fields. In this connection, we have the following direct consequence of 
Corollary~\ref{cor:1}:

\begin{corollary}
\label{cor:2}
If \eqref{(7.15)} holds, only $Z=0$ is compatible with the singularity hypothesis.
\end{corollary}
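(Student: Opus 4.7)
The plan is to derive the conclusion by combining Corollary~\ref{cor:1} with the identity \eqref{(7.15)}; no new analytical input is required beyond what has already been established, so the proof is essentially a syllogism on the finiteness of the total spectral mass.

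First I would assume that the singularity hypothesis \eqref{(7.8)} holds, together with \eqref{(7.15)}. By Corollary~\ref{cor:1}, the singularity hypothesis forces $\int_{0}^{\infty} {\rm d}\sigma(m_\circ^{2}) = \infty$. Next I would use the decomposition \eqref{(7.4)} of the spectral measure ${\rm d}\rho = Z\,\delta(m_\circ^{2}-m^{2}) + {\rm d}\sigma$, together with the bound $0 \le Z < \infty$ from \eqref{(7.5)}, to conclude
\[
\int_{0}^{\infty} {\rm d}\rho(m_\circ^{2}) \;=\; Z \,+\, \int_{0}^{\infty} {\rm d}\sigma(m_\circ^{2}) \;=\; \infty \; .
\]
Invoking \eqref{(7.15)} then yields $1/Z = \infty$, which forces $Z = 0$. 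Equivalently, and perhaps more transparently, one can argue by contraposition: if $Z>0$, then by \eqref{(7.15)} we would have $\int_{0}^{\infty}{\rm d}\rho(m_\circ^{2}) = 1/Z < \infty$, and Theorem~\ref{thm:3} would immediately give $sd(W_{+}) \le 2$, contradicting \eqref{(7.8)}.

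There is no real analytical obstacle here; the only point that deserves a brief comment is interpretative, namely that \eqref{(7.15)} was derived under the ETCR assumption which presupposes $Z>0$ (see \eqref{(7.12)}--\eqref{(7.14)}). Thus ``$Z=0$ is compatible with the singularity hypothesis'' should be read in the limiting/formal sense that \eqref{(7.15)} forces the right-hand side to diverge, which is consistent with the requirement $\int {\rm d}\sigma = \infty$ from Corollary~\ref{cor:1} but incompatible with any strictly positive value of $Z$. I would include a short remark to this effect so that the reader does not mistake the symbolic equation $1/Z = \infty$ for a literal one; the mathematical content of the corollary is simply that the ETCR and the singularity hypothesis together preclude $Z>0$.
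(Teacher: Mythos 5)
Your proof is correct and takes essentially the same route the paper intends when it calls the corollary a ``direct consequence'' of Corollary~\ref{cor:1}: combine $\int d\sigma = \infty$ (from the singularity hypothesis) with the decomposition \eqref{(7.4)} and finiteness of $Z$ to get $\int d\rho = \infty$, then read off $Z=0$ from \eqref{(7.15)}; your contrapositive via Theorem~\ref{thm:3} is an equally valid shortcut. Your closing interpretive caveat about \eqref{(7.15)} presupposing $Z>0$ matches the paper's own discussion immediately after the corollary, where it concludes that the two definitions of $Z$ (in \eqref{(7.4)} and in \eqref{(7.15)}) are not equivalent and that the ETCR should not be assumed in general.
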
  

It follows from Corollary~\ref{cor:1} and Corollary~\ref{cor:2} that 
the two definitions of~$Z$, in \eqref{(7.4)}
and in \eqref{(7.15)} are \emph{not} equivalent. Since the ETCR, which 
implies \eqref{(7.15)}, is not
generally valid for interacting fields, as briefly reviewed in the 
forthcoming paragraph, we conclude
that the singularity hypothesis opens the possibility of the 
non-universal validity of \eqref{(7.11)}.

The hypothesis of ETCR has been in serious doubt for a long time, see, 
\emph{e.g.}, the remarks in \cite{StreWight}, p.~101. Its validity has been 
tested \cite{WFW} in a large class of models in two-dimensional space-time, where it was 
definitely proved not to hold in the case of the Thirring model \cite{Thirrm} for 
large coupling.  The only case found in \cite{WFW} in which the ETCR holds was
the Schwinger model, which is known \cite{LSwi} to be equivalent to a canonical theory
of a massive vector field! A nother interesting case of canonical behavior were the
canonical interacting quantum fields on two-dimensional de Sitter space \cite{JaMu}, where, 
however, the canonical character was intrinsically due to the geometry of de Sitter space.

Unfortunately, however, the ETCR (for interacting fields!) is still assumed to hold, 
without comment, in standard treatises (\cite{Weinb1}, p. 460; \cite{Barton1}, p.47).

Although, when $0<Z< \infty$, $Z$ is interpreted as
the non-perturbative field strength renormalization, relating ``bare'' fields to physical
fields, as in \eqref{(7.12)}, the remaining case \eqref{(7.11)} remains to be understood.
As stated in (\cite{Weinb1}, pg. 461), ``the limit $Z=0$ has an interesting interpretation
as a condition for a particle to be composite rather than elementary''. This brings us to
our next topic.

\section{A proposal for the meaning of the condition $Z=0$: the presence of massless and unstable particles} 

In the presence of massless 
particles (photons), Buchholz \cite{Buch} used Gauss' law to show that the discrete spectrum 
of the mass operator
	\begin{equation}
		P_{\sigma} P^{\sigma} = M^{2} = P_{0}^{2}-\vec{P}^{2}
		\label{(7.16)}
	\end{equation}
is empty. Above, $P^{0}$ is the generator of time translations in the physical 
representation, \emph{i.e.}, the physical hamiltonian $H$, and $\vec{P}$ is the physical 
momentum. This fact is interpreted as a confirmation of the phenomenon that 
particles carrying an electric charge are accompanied by clouds of soft photons.

Buchholz formulates adequate assumptions which must be valid in order that one may 
determine the electric charge of a physical state $\Phi$ with the help of Gauss' law 
		\begin{equation}
			\langle \Phi, j_{\mu} \Phi \rangle= \;
                        \langle \Phi, \partial^{\nu}F_{\nu,\mu} \Phi \rangle \; . 
			\label{(7.17)}
		\end{equation}
$F_{\nu,\mu}$ denotes the electromagnetic field observable, and \eqref{(7.17)} is assumed to hold in the sense of distributions 
on ${\cal S}(\mathbb{R}^{n})$. 

When endeavouring to apply Buchholz's theorem to concrete models such as $qed_{1+3}$, 
problems similar to those occurring in connection with the charge superselection 
rule \cite{StrWight} arise. The most obvious one is that Gauss' law 
\eqref{(7.17)} is only expected to be 
valid (as an operator equation in the distributional sense) in non-covariant 
gauges, the Coulomb gauge in the case of $qed_{1+3}$,  but not in 
covariant gauges \cite{StrWight}. If we adopt the present framework, our option is 
to use the Coulomb gauge and to define the theory in terms of  the 
$n$-point Wightman functions of observable  fields, 
\emph{i.e.}, gauge-invariant fields, thus maintaining 
Hilbert-space positivity. The hypotheses of Buchholz's theorem are then 
in consonance with the requirements of Wightman's theory~\cite{StreWight}, 
and are applicable to $qed_{1+3}$. 

In a charged electron sector, denoting spinor indices by $\alpha,\beta$, we have
	\begin{align}
		S_{\alpha,\beta}^{+}(x-y) 
		& = \langle \Omega, \Psi_{\alpha}(x)\bar{\Psi}_{\beta}(y) \Omega \rangle
		\label{(7.18.1)} \\
		& = \int_{0}^{\infty} {\rm d}\rho_{1}(m_\circ^{2}) S_{\alpha,\beta}^{+}(x-y;m_\circ^{2}) 
		+ \delta_{\alpha,\beta} \int_{0}^{\infty} {\rm d} \rho_{2}(m_\circ^{2}) \Delta^{+}(x-y;m_\circ^{2}) \, , 
		\nonumber
	\end{align}
with ${\rm d} \rho_{ph}, {\rm d} \rho_{1}, {\rm d} \rho_{2}$ positive,  
measures, and $\rho_{1}$ satisfying certain bounds with respect 
to $\rho_{2}$ (\cite{Lehmann}, p.~350).
	\begin{equation}
		{\rm d} \rho_{1}(m_\circ^{2}) = Z_{2}\delta(m_\circ^{2}-m_{e}^{2}) + {\rm d} \sigma_{1}(m_\circ^{2}) \; , 
		\label{(7.18.2)}
	\end{equation}
with $m_{e}$ the renormalized electron mass, according to conventional notation. 
Recalling ~\eqref{(7.18.1)}, we have the following immediate
corollary of Buchholz's theorem: 

\begin{corollary}
\label{cor:3} 
For $qed_{1+3}$ in the Coulomb gauge, assuming it exists in the sense 
of the framework of this section
and satisfies the assumptions of Buchholz's theorem, the following condition holds:
and
	\begin{equation}
		Z_{2} = 0 \; . 
		\label{(7.19)}
	\end{equation}
\end{corollary}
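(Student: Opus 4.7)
The plan is to show that the discrete contribution $Z_{2}\delta(m_\circ^{2}-m_{e}^{2})$ in the decomposition~\eqref{(7.18.2)} of $d\rho_{1}$ corresponds exactly to a discrete eigenvalue $m_{e}^{2}$ of the mass operator~\eqref{(7.16)} in the charged electron sector, and then to invoke Buchholz's theorem to conclude that such a discrete contribution must be absent.

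First I would recall that the spectral representation~\eqref{(7.18.1)} of $S^{+}_{\alpha,\beta}$ is obtained, as in the derivation of the K\"all\'en--Lehmann representation~\eqref{(7.1)}--\eqref{(7.4)}, by inserting a complete set of eigenstates of the energy--momentum operator $P^{\mu}$ between the two fields $\Psi_{\alpha}(x)$ and $\bar{\Psi}_{\beta}(y)$, and using relativistic covariance and the spectral condition to write the sum over intermediate states as an integral with respect to the positive spectral measure $d\rho_{1}$ (plus the scalar piece governed by $d\rho_{2}$). In this derivation, the decomposition~\eqref{(7.18.2)} is not a mere parametrization: the coefficient $Z_{2}$ is, up to a positive normalization, the squared modulus of the matrix element of $\Psi_{\alpha}$ between the vacuum and the one-particle states lying on the discrete mass shell $m_\circ^{2}=m_{e}^{2}$, so that $Z_{2}>0$ if and only if $m_{e}^{2}$ is a discrete eigenvalue of $M^{2}=P_{\sigma}P^{\sigma}$ in the charged sector generated by $\Psi$ from the vacuum.

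Next I would apply Buchholz's theorem, whose hypotheses are assumed in the statement of the corollary: since $qed_{1+3}$ in the Coulomb gauge satisfies Gauss' law~\eqref{(7.17)} on the charged electron sector, the discrete spectrum of the mass operator~\eqref{(7.16)} in that sector is \emph{empty}. Combined with the identification above, this forces the $\delta$-function contribution at $m_\circ^{2}=m_{e}^{2}$ in~\eqref{(7.18.2)} to vanish, whence $Z_{2}=0$, as claimed. The whole contribution of what would have been the discrete one-electron mass shell is absorbed into the continuous measure $d\sigma_{1}$, reflecting the ``infraparticle'' dressing of the electron by soft photons.

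The main obstacle I foresee is the first step: carefully justifying that the coefficient $Z_{2}$ in~\eqref{(7.18.2)} coincides with the spectral weight of $M^{2}$ at $m_{e}^{2}$ on the cyclic subspace generated by $\Psi$. This requires taking some care with the spinor structure (the presence of the second term in~\eqref{(7.18.1)} involving $d\rho_{2}$) and with the distributional nature of $\Psi_{\alpha}(x)\Omega$, which is not a vector but must be smeared with Schwartz test functions before the spectral resolution of $P^{\mu}$ can be applied; once this identification is in place, the remainder of the argument is an immediate citation of Buchholz's result.
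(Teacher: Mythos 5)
Your proposal takes essentially the same approach as the paper: the paper presents the corollary as an ``immediate'' consequence of Buchholz's theorem, with the implicit identification of the $Z_{2}\delta(m_\circ^{2}-m_{e}^{2})$ term in the spectral decomposition~\eqref{(7.18.2)} with a discrete contribution to the mass spectrum in the charged sector, which Buchholz's result rules out. You spell out this identification and flag the technical care required (spinor structure, the second $d\rho_{2}$ piece, smearing with test functions before applying the spectral resolution), but the mathematical content is the same chain of reasoning the paper invokes.
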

  
Above, $\Psi$ denote observable fermion fields, which we assume to exist as a generalization of those 
constructed by Lowenstein and Swieca \cite{LSwi} in
$qed_{1+1}$, see also \cite{Steinmann1} for a similar attempt in perturbative $qed_{1+3}$. In the words of Lieb and Loss~\cite{LLrel}),
who were the first to observe this phenomenon in a relativistic model of qed,
 ``the electron Hilbert space is linked to the photon Hilbert space in an inextricable way''. Thereby, in this way, 
``dressed photons'' and ``dressed electrons'' arise as new entities. 

In order to stress the practical effects of the confusion of the two definitions of $Z$ in the
literature, when the ETCR is assumed (including \cite{Weinb1}), it should be remarked that the first
reference given by Weinberg on a model for unstable particles \cite{HouJou} assumes (what ammounts to) $m_{C}\le 2m$
instead of \eqref{(7.20)} (that is, we are in the stability range!), and finally obtains $Z=Z_{C}=0$!
This is solely due to the ``double definition'' of the non-perturbative wave function renormalization
constant!

We now come back to Weinberg's suggestion that the condition $Z=0$ describes unstable particles. 

Turning to scalar fields for simplicity, we consider the case of a scalar particle $C$, 
of mass $m_{C}$, which may decay into a set of two (for simplicity) stable particles, each 
of mass $m$. We have energy conservation in the rest frame of $C$,  \emph{i.e.}, 
	\[
		m_{C}=\sum_{i=1}^{2} \sqrt{{\vec{q}_{i}\,}^{2}+m_{i}^{2}} 
		\ge \sum_{i=1}^{2} m_{i} \; , 
	\]
with $m_{i}=m,i=1,2$, and $\vec{q}_{i}$  the 
momenta of the two particles in the rest frame of $C$:
	\begin{equation}
		m_{C} > 2m  \; . 
		\label{(7.20)}
        \end{equation}
In order to check that $Z=Z_{C}=0$ when \eqref{(7.20)} holds, while $0<Z<\infty$ is valid
in the stable case $m_{C} < 2m$, in a model, we are beset with the difficulty to obtain information 
on the two-point function. 

There exists a quantum model of Lee type of a composite (unstable) particle, 
satisfying \eqref{(7.20)}, where \eqref{(7.11)}
was indeed found, that of Araki et al.~\cite{AMKG}. Unfortunately, however, the (heuristic) results
in \cite{AMKG} have one major defect: their model contains ``ghosts''. A very good review of the
existent (nonrigorous) results on unstable particles is the article by Landsman \cite{Landsman}, to which
we refer for further references and hints on the intuition behind the criterion \eqref{(7.11)}.

In the next section we come back to a set of models for atomic resonances and particles, which 
might support the suggested picture of quantum field theory in terms of ``dressed'' and unstable particles,
and in the last section we discuss the crucial conceptual issues and difficulties associated with this program,
comparing it with alternative approaches. 

\section{Models for atomic resonances, unstable and ``dressed'' particles: what distinguishes quantum field theory from many-body systems?}

In \cite{Wre1} the model below - the Lee-Friedrichs model of atomic resonances - was revisited. Its Hamiltonian may be written
       \begin{equation}
               H = H_{0} + H_{I}
               \label{(8.1)}
       \end{equation}
with
       \begin{equation}
              H_{0} = E_{0} \frac{\mathbf{1}+\sigma_{z}}{2}\otimes \mathbf{1} + \mathbf{1}\otimes\int d^{3}k |k|a^{\dag}(k)a(k)
              \label{(8.2)}
       \end{equation}
and
       \begin{equation}
              H_{I}^{1} = \beta[\sigma_{-}\otimes a^{\dag}(g)+\sigma_{+}\otimes a(g)]
              \label{(8.3.1)}
       \end{equation}
The operators act on the Hilbert space
       \begin{equation}
             {\cal H} \equiv {\cal C}^{2} \otimes {\cal F}
             \label{(8.4)}
       \end{equation}
where ${\cal F}$ denotes symmetric (Boson) Fock space on $L^{2}(\mathbf{R}^{3})$ (see, e.g., \cite{MaRo}), which
describes the photons. We shall denote by $(\cdot,\cdot)$ the scalar product in ${\cal H}$. 
Formally, $a(g) \equiv \int d^{3}k g(k) a(k)$, and $k$ denotes a three-dimensional vector.
The $\dag$ denotes adjoint, $\sigma_{\pm}=\frac{\sigma_{x}\pm \sigma_{y}}{2}$, and $\sigma_{x,y,z}$ are the usual
Pauli matrices. The operator
       \begin{equation}
            N = \frac{\mathbf{1}+\sigma_{z}}{2}\otimes \mathbf{1}+ \mathbf{1} \otimes \int d^{3}k a^{\dag}(k)a(k)
            \label{(8.5)}
       \end{equation}
commutes with $H$. We write
       \begin{equation}
            N = \sum_{l=0}^{\infty} l P_{l}
            \label{(8.6)}
       \end{equation}
and introduce the notation
       \begin{equation}
           H_{l} \equiv P_{l} H P_{l}
           \label{(8.7)}
       \end{equation}
$H_{l}$ is the restriction of $H$ to the subspace $P_{l} {\cal H}$. 
Let in addition
        \begin{equation}
             E_{0} > \beta^{2} \int d^{3}k |k|^{-2} |g(k)|^{2}
             \label{(8.8)}
        \end{equation}
Then the one-dimensional subspace $P_{0}{\cal H}$  consists
of the ground state vector
       \begin{equation}
           \Phi_{0} \equiv |-) \otimes |\Omega)
           \label{(8.9)}
       \end{equation}
with energy zero, where
       \begin{equation}
           \sigma_{z} |\pm) = \pm |\pm)
           \label{(8.10)}
       \end{equation}
denote the upper $|+)$ and lower $|-)$ atomic levels, and $|\Omega)$ denotes the zero-photon state in ${\cal F}$. Note that
$\Phi_{0}$ is also eigenstate of the free Hamiltonian $H_{0}$, with energy zero, and we say therefore that the model
has a persistent zero particle state. 

We shall refer to the model described by \eqref{(8.1)} as \emph{Model 1}. Replace, now, in Model 1, $H_{I}^{1}$ by
       \begin{equation}
              H_{I}^{2} = \beta[(\sigma_{-}+\sigma_{+})\otimes (a^{\dag}(g)+ a(g))]
              \label{(8.3.2)}
       \end{equation}
We shall refer to the ensuing model as \emph{Model 2}. Let, now, $N,V$ satisfy \emph{anticommutation relations}, i.e.,
$$
\{N,N^{+}\}=\{V,V^{+}\}=1 
$$
together with all other anticommutators equal to zero, i.e. $\{N,V^{+}\}=0$, etc.
By means of Schwinger's representation
\begin{eqnarray*}
\sigma_{+}=V^{+}N \\
\sigma_{-}=N^{+}V \\
\sigma_{3}= \frac{1}{2}(V^{+}V-N^{+}N)
\end{eqnarray*}
with the further correspondences
\begin{eqnarray*}
|+) = V^{+}|0) \\
|-) = N^{+}|0)
\end{eqnarray*}
where $|0)$ denotes the fermion no-particle state, Model 1 becomes the usual Lee model for particles, and Model 2 a
``refined'' Lee model for particles; we refer to them as \emph{Model 3} and \emph{Model 4} respectively.
There is one big difference, however, between the atomic resonance case and the
particle case. In the former case, the function $g$ in \eqref{(8.3.1)} or \eqref{(8.3.2)} is square-integrable, and,
indeed, the physical dipole-moment matrix elements provide natural cutoffs, so that neither infrared nor ultraviolet
problems arise. In the latter case, however, one aims at the pointwise limit $g(k) \to 1$, which should lead to
an euclidean-invariant theory. This step is delicate and requires mass and wave function renormalizations. To the
particle versions, Model 3 and Model 4, assuming they are well-defined, the forthcoming framework is applicable.

Let a theory of a scalar field of mass $m>0$ be invariant under the euclidean group, that is, the group of translations and 
rotations of euclidean space $\vec{x} \to R\vec{x}+\vec{a}$, where $R$ denotes a rotation. By Haag's theorem (\cite{Wig4}, p. 249),
in a euclidean field theory which uses the Fock representation, the no-particle state $\Psi_{0}$ is euclidean invariant, i.e.,
\begin{equation}
U(\vec{a},R)\Psi_{0}=\Psi_{0}
\label{(8.11)}
\end{equation}
We have the
\begin{theorem}
\label{thm:4}
Let the Hamiltonian be of the form 
\begin{equation}
H=\int {\cal H}(\vec{x})d\vec{x}
\label{(8.12)}
\end{equation}
where ${\cal H}(\vec{x})$ satisfies
\begin{equation}
U(\vec{a},\mathbf{1}) {\cal H}(\vec{x} U(\vec{a},\mathbf{1})^{-1} = {\cal H}(\vec{x}+\vec{a}) 
\label{(8.13)}
\end{equation}
Then, if $\Psi$ is any state invariant under $U(\vec{a},\mathbf{1})$, i.e.,
\begin{equation}
U(\vec{a},\mathbf{1}) \Psi = \Psi
\label{(8.14)}
\end{equation}
then $\Psi$ belongs to the domain of $H$ only if $H \Psi =0$.
\end{theorem}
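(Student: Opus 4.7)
The idea is to turn $\|H\Psi\|^{2}$ into an integral over $\mathbb{R}^{3}$ whose integrand is independent of $\vec{x}$, so finiteness forces the integrand and hence $H\Psi$ to vanish. First I would verify that $H$ itself is translation invariant: from \eqref{(8.13)} and the translation invariance of Lebesgue measure,
\begin{equation*}
U(\vec{a},\mathbf{1}) H U(\vec{a},\mathbf{1})^{-1} = \int U(\vec{a},\mathbf{1}) {\cal H}(\vec{x}) U(\vec{a},\mathbf{1})^{-1} \, d\vec{x} = \int {\cal H}(\vec{x}+\vec{a}) \, d\vec{x} = H,
\end{equation*}
so that, whenever $\Psi \in {\rm dom}(H)$ satisfies \eqref{(8.14)}, the vector $\Phi \equiv H\Psi$ is translation invariant as well.

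Next I would compute, using \eqref{(8.12)},
\begin{equation*}
\|H\Psi\|^{2} = (\Phi, H\Psi) = \int (\Phi, {\cal H}(\vec{x}) \Psi) \, d\vec{x},
\end{equation*}
and exploit \eqref{(8.13)} together with $U(\vec{x},\mathbf{1})\Phi = \Phi$, $U(\vec{x},\mathbf{1})\Psi = \Psi$, and unitarity of $U$ to rewrite the integrand as
\begin{equation*}
(\Phi, {\cal H}(\vec{x}) \Psi) = (U(\vec{x},\mathbf{1})^{-1}\Phi, {\cal H}(0) U(\vec{x},\mathbf{1})^{-1}\Psi) = (\Phi, {\cal H}(0) \Psi) \equiv c,
\end{equation*}
a constant independent of $\vec{x}$. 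Hence $\|H\Psi\|^{2}$ formally equals $c$ times the volume of $\mathbb{R}^{3}$; since the left side is finite we must have $c = 0$, and then $\|H\Psi\|^{2} = 0$, whence $H\Psi = 0$.

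The hard part is making the integral \eqref{(8.12)} rigorous: generally ${\cal H}(\vec{x})$ is only an operator-valued distribution, so \eqref{(8.12)} must be interpreted as a limit of bounded-volume approximants $H_{V} \equiv \int_{V} {\cal H}(\vec{x}) \, d\vec{x}$ on some invariant dense domain, with $\Psi \in {\rm dom}(H)$ meaning $H_{V}\Psi \to H\Psi$ as $V \uparrow \mathbb{R}^{3}$. One then runs the calculation above for $(\Phi, H_{V}\Psi) = c \cdot |V|$ and takes $V \uparrow \mathbb{R}^{3}$, forcing $c = 0$ just as in the formal version. Justifying the interchange of inner product with the $\vec{x}$-integration (via Fubini on the weak integral defining $H_{V}$) is the only technical subtlety; the algebraic heart of the argument remains the step that collapses $(\Phi, {\cal H}(\vec{x}) \Psi)$ to a constant.
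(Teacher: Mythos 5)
Your proof is correct, but it follows a recognizably different technical route than the paper's. The paper expands
\begin{equation*}
\|H\Psi\|^{2} = \int\!\!\int d\vec{x}\, d\vec{y}\, \bigl(\Psi, {\cal H}(\vec{x}){\cal H}(\vec{y})\Psi\bigr),
\end{equation*}
notes that the integrand depends only on $\vec{x}-\vec{y}$ (by covariance of ${\cal H}$ and invariance of $\Psi$ alone), and concludes the double integral is $0$ or $\infty$ by factoring out the infinite volume of $\mathbb{R}^{3}$. You instead first establish $[H, U(\vec{a},\mathbf{1})]=0$, so that $\Phi = H\Psi$ is itself translation invariant, then write $\|H\Psi\|^{2} = (\Phi, H\Psi)$ as a \emph{single} integral whose integrand $(\Phi, {\cal H}(\vec{x})\Psi)$ collapses to the manifest constant $(\Phi, {\cal H}(0)\Psi)$. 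The payoff of your variant is that the integrand is literally constant rather than merely a function of $\vec{x}-\vec{y}$, which makes the $0$-or-$\infty$ dichotomy visibly transparent; the cost is the extra lemma that $H$ commutes with translations (easy, but the paper avoids needing it by staying with correlations of ${\cal H}$ in the state $\Psi$). One small point to watch in your bounded-volume regularization: the identity $(\Phi, H_{V}\Psi) = c\,|V|$ gives the conclusion only after you know $(\Phi, H_{V}\Psi) \to \|H\Psi\|^{2}$, i.e.\ that $H_{V}\Psi \rightharpoonup H\Psi$ at least weakly; this is the same interpretational gap around \eqref{(8.12)} that the paper itself acknowledges and leaves at the heuristic level.
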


\begin{proof}
We have that $(\Psi, {\cal H}(\vec{x}){\cal H}(\vec{y})\Psi)$ depends only on $\vec{x}-\vec{y}$, so that
\begin{equation}
||H\Psi||^{2} = \int\int d\vec{x} d\vec{y}(\Psi, {\cal H}(\vec{x}){\cal H}(\vec{y})\Psi) = 0 \mbox{ or } \infty
\label{(8.15)}
\end{equation}
according to whether $H\Psi = 0$ or $||H\Psi|| = \infty$, that is, $H$ must annihilate any translation-invariant state to which it
is applicable.
\end{proof}

Choosing $\Psi = \Psi_{0}$ in the Fock representation, it follows from theorem ~\ref{thm:4} together with the consequence of Haag's theorem 
\eqref{(8.11)} that $H$
can be applied to the no-particle state only if it annihilates it. The above was extracted from \cite{Wig4}, p. 250, and to make it
entirely rigorous \eqref{(8.12)} should be written as a limit of the ``smearing'' of ${\cal H}$ with smooth functions. This result
may be applied to Model 3 (assuming the renormalizations performed such that the pointwise limit can be taken, leading to an euclidean
invariant quantum field theory), with the no-particle state identified to the state given by $\Psi \equiv |\Omega)\otimes |-)$, in
and $|\Omega)$ the no-particle photon state: $H \Psi = 0$. We say that there is
no \emph{vacuum polarization}. On the other hand, for Model 4, $\Psi$ does not belong to the domain of $H$ due to the term  $\sigma_{+}a^{\dag}(g)$
in \eqref{(8.3.2)}, which, in terms of fermion operators equals $V^{+}N a^{\dag}(g)$, and $V^{+}N |-)= |+)$. In the case there is vacuum polarization,
theorem ~\ref{thm:4} implies that there exists an ``infinite energy barrier'' between the Fock no-particle state and the true vacuum: non-Fock
representations are required, that is, the ``physical'' Hilbert space is not unitarily equivalent to Fock space.   

All this been said, Model 3 turns out, very unexpectedly, to be afflicted by ``ghosts'', i.e., states of negative norm (see \cite{Barton1}, Chap. 12).
We say unexpectedly, because the occurrence of ``ghosts'' in relativistic quantum field theory is known to be a consequence of the use of
manifestly covariant gauges (see \cite{Weinb2}), and the Lee model is not relativistically invariant. Note, however, that in the Schwinger representation
the ``fermions'' $V$ and $N$ are two states of an \emph{infinitely heavy} (spinless) fermion, i.e., \emph{there is no recoil}! Of course, this is a highly unphysical assumption. One may consider, however, the model with recoil, describing the interaction between the photon field and \emph{two} particles $V$ and $N$ , with energies $E_{V}(p)= (M^{2}+p^{2})^{1/2}$ and $E_{N}(p)=(m^{2}+p^{2})^{1/2}$, and interaction energy 
$H_{I}=\int d^{3}p d^{3}k g(p,k)[V^{\dag}(p)N(p-k)a(k)+ \mbox{ h.c. }]\lambda$, with $\lambda$ proportional to the charge, and
$g(p,k) \equiv f(p,k)[8E_{V}(p)E_{N}(p-k)\omega(k)]^{-1/2}$, $\omega(k)=(\mu^{2}+k^{2})^{1/2}$, the latter representing the (eventually massive) ``photon'' energy. This model is well-defined and free of ``ghosts'' in the pointwise limit $f \to 1$, the latter taken in a careful way, according to a renormalization prescription. This was proved by Yndurain \cite{Ynd} in a seldom cited, but very important paper. Thus, the pathologies associated to the original Lee model just have to do with neglecting recoil! 

We henceforth refer to the Yndurain versions of models 3 and 4 as Model 3Y and Model 4Y.

Model 2 (for zero temperature, as a model of atomic resonances) may be the simplest prototype of a model with vacuum polarization. Model 3Y should be suitable to study the criterion $Z=0$ for unstable particles. The subtlest point in this connection is the fact that the term proportional to the delta measure in \eqref{(7.4)} has coefficient $Z$, and in its absence, due to the condition $Z=0$, the (renormalized) mass seems to remain undetermined. One must, therefore, be able to determine the renormalized mass uniquely from \emph{alternative} general conditions on the Hamiltonian, such as the requirement that it be self-adjoint and bounded below.

We close this section with some remarks of what distinguishes quantum field theory from many-body systems. The latter are characterized by an interaction
Hamiltonian $\int d^{3}x d^{3}y \sum_{\sigma,\sigma^{'}}V(x-y)\Psi^{\dag}(x,\sigma)\Psi^{\dag}(y,\sigma^{'})\Psi(y,\sigma^{'})\Psi(x,\sigma)$, where $V$ denotes the interaction potential, $\sigma$ and $\sigma^{'}$ spin indices and $\Psi$ are boson or fermion quantized operators (see, e.g., \cite{MaRo}, p. 110).
The interaction term, and consequently the whole Hamiltonian, annihilates therefore the no-particle state (Fock vacuum). In contrast, quantum field theories  display in general vacuum polarization, as typified by Model 4Y, which is an approximation of the usual trilinear coupling terms which occur in qed and also in quantum chromodynamics (qcd), as a consequence of relativistic invariance and the gauge principle (see \cite{Weinb1}, \cite{Weinb2}). Coupled with Haag's theorem and theorem ~\ref{thm:4}, this implies non-Fock representations, as previously discussed. This is a further manifestation of the singular nature of quantum fields. Of course, the above reference to many-body systems concerns a finite number of particles. For nonzero density, i.e., an infinite number of degrees of freedom, non Fock representations arise even in the case of free systems, by a well-known mechanism (\cite{Sewell},section 2.3). What we wished to emphasize is that in the case of quantum field theories, such representations arise due to a particular reason, namely, vacuum polarization.

\section{Conclusions}

In this review, we focused on the foundations of quantum field theory, which is still believed to be the most fundamental theory, describing in principle all phenomena observed in atomic and particle physics. Unlike quantum mechanics, however, its foundations are still not cleared up. We attempted to describe how some novel approaches lead to a unified picture, in~spite of the fact that several difficult open problems remain. The~solution of some of them seems to be nontrivial, but~feasible. Many other problems demand new ideas, however.

The first issues concern \emph{relativistic} quantum field theories, such as qed or qcd. They have been discussed in Sections 3 and and 4, after a pedagogic review of asymptotic and divergent series in Section 2.

The fact that divergent series, even if asymptotic, do not define a theory mathematically, was emphasized in the case of qed; see, in particular, \eqref{(10)}--\eqref{(12)}. From~the latter, we see that even the assumption that renormalized perturbation series, such as \eqref{(1)}, is the asymptotic series of an unknown function, which is probably true and explains the dazzling success of perturbative qed, cannot hold for qcd, where a rough dimensionless measure of the coupling constant is of order $1-10$. The~improved perturbation theory deriving from renormalization group arguments and asymptotic freedom in qcd (Reference \cite{Weinb2}, Ch. 18.7) has no comparable experimental~consequences. 

We, therefore, turned our attention to \emph{nonperturbative} phenomena, one of the very few (in qed) being the Casimir effect, for~simplicity in its simplest form, that of perfectly conducting parallel plates in the vacuum. We showed in Section 3 that Tao's method of smoothed sums~\cite{Tao} eliminates the ``residual infinity'' present in the Euler-maclaurin series for the energy density of the field (Theorem ~\ref{thm:2}). This smoothing of the series is a new method of accounting for the singular nature of quantum fields, which are not pointwise defined.

Going beyond specific models or effects, we reviewed in Section 4 a recent novel criterion to characterize interacting theories in the Wightman framework~\cite{JaWre1}. Here, a different aspect of the singular nature of quantum fields is touched upon: the nature of the singularity of the two-point Wightman function at small distances. Using the K\"{a}ll\'{e}n- Lehmann representation of the two-point function $W_{+}$ and the Steinmann scaling degree~\cite{Steinmann} $sd$, we propose to characterize interacting Wightman theories by the requirement $sd(W_{+})>2$. We were then able to prove a Theorem \ref{thm:3} which states that, in an interacting Wightman theory, the total spectral mass which occurs in the K\"{a}ll\'{e}n- Lehmann representation is infinite. This allows us, in turn, to state that, contrary to previous belief, the~\emph{nonperturbative} wave-function renormalization constant $Z$ is not universally equal to zero. We then interpret the condition $Z=0$ in section 5 in a two-fold way: it is either due to the nonexistence of a pure point part in the mass spectrum in charged sectors in theories, such as qed, due to infraparticles (electrons with their photon clouds), or~to the existence of unstable particles, as~conjectured by \mbox{Weinberg (Reference \cite{Weinb1}, p. 461.} We believe that this new approach is connected with the theories of ``dressed'' photons and electrons in Reference~\cite{JaWre2,LLrel}. 

In qcd, the~gluons being massless, a~similar phenomenon as discussed in the last paragraph should be expected: an interacting theory of ``dressed'' quarks and gluons. It has been suggested by Casher, Kogut and Susskind~\cite{CKS}, and Swieca~\cite{Swi} that massless $qed_{1+1}$ contains what is desired of a theory of quark confinement, in~the sense that under short distance probing the theory behaves as if it contained particles which do not manifest themselves as physical states: in that limit one recovers a theory of free (massless) electrons and photons. For~large distances, the electrons completely disappear from the picture, giving rise to massive photons, by~``Bosonization'' (an extreme form of the ``dressing'' phenomenon). In Reference~\cite{JaWre1}, we formulated a precise criterion for confinement which applies to $qed_{1+1}$ and which, generalized to qcd (under assumptions on the leading infrared singularity of the gluon propagator), yields a surprisingly realistic picture of confinement and asymptotic~freedom. 

We should like to mention a recent alternative approach, due to Buchholz and Fredenhagen (Reference \cite{BF} and references given there). There the assumed existence of a time-arrow yields a novel approach, centered on an interesting, non-commutative structure of dynamical algebras inspired by scattering theory. An~early reference which treats unstable particles in the spirit of open systems, associated to a dynamical group (the Poincar\'{e} semigroup), is the paper by Alicki, Fannes, and Verbeure~\cite{AFV}. Sewell's book~\cite{Sewell1} also deals with the emergent macrophysics originating from quantum mechanics, albeit not in the realm of quantum field~theory.

More directly aimed at gauge theories is the theory of string-localised fields, due to Mund, Schroer and Yngvason, which has some points of contact with the present approach, specifically the insistence on positivity, i.e., ``ghostless'' theories. The theory has recently progressed considerably through a deep analysis of Gauss' law in that connection, see \cite{MRS} and references given there to the previous literature. 

A theory of quantum fields in de Sitter space has been developed in recent years by Jaekel, Mund and Barata. It is particularly interesting because of its connection to thermal aspects of quantum fields, as early realized by Narnhofer, Peter and Thirring \cite{NaPeTh}. We refer to \cite{JaMu} for some recent references.  

In addition, an important perturbative algebraic quantum field theory has been developed in the last twenty years. Its recent application to the sine-Gordon model \cite{BFR} actually yielded a convergent series, which marks the success of this approach: we refer to \cite{BFR} to references to the previous extensive literature. An earlier treatment of integrable two-dimensional models, with several novel structural features, both conceptual and technical, is due to Lechner and collaborators, see the review by Alazzawi and Lechner \cite{AlLe} and references given there.

In spite of the fact that the convergence of the perturbation series for models such as the massive Thirring-Schwinger model ($qed_{1+1}$) is well-known since the early days of constructive quantum field theory (1976) \cite{FroSei}, it is important to emphasize that no singular point fields appear in the functional analytic approach of \cite{BFR}, so that it may be expected that this new method, invented by Fredenhagen, may be able to cope with physically relevant models in the future, such as gauge theories in four space-time dimensions.   

Concerning the latter, since the gluons are, as the photon, massless, we are proposing a picture of interacting gauge theories as composed of ``dressed'' quarks and gluons, or dressed electrons and photons, for qed. The latter has been suggested by Lieb and Loss \cite{LLrel} and Jaekel and myself \cite{JaWre2}, and we suggested further that this picture is characterized by the condition $Z=0$ \cite{JaWre1}. Unfortunately, even in $qed_{1+3}$ this involves a construction of fermion observable fields which has not even been achieved in perturbation theory \cite{Steinmann1}. This project is therefore very difficult for qed, and the self-interactions of the gluons in qcd make it even much more difficult there. We feel, however, that it contains a ``grain of truth'': in particular, Lieb and Loss provide convincing arguments to the effect that description by ``dressed'' photons is the more natural one in their relativistic model of qed, while we relate this phenomenon to the non-Fock structure of the representations (when the ultraviolet cutoff is removed). The latter appear in connection with a generalization of the transformations analized by Wightman and Schweber \cite{WigSch} which occur in $qed_{3+1}$ in the Coulomb gauge \cite{JaWre2}. Therefore, from our point of view, a first step would be a study of a simpler model displaying vacuum polarization. The simplest such is Model 2.

Although the Lee-Friedrichs model is very old (see \cite{Wre1} and references given there), it is one of the few approximations to relativistic quantum field models which preserve some basic physical features. This may be due to the fact that its space dimension is three, and that it retains the trilinear coupling familiar from relativistic theories. We proposed in Section 6 that the main feature distinguishing quantum field theory from many-body systems is \emph{vacuum polarization}. When the theories are euclidean invariant, Haag's theorem, together with theorem ~\ref{thm:4} provides a way to distinguish Model 3Y from Model 4Y (the letter Y refers to the important work of Yndurain \cite{Ynd}, who showed that the ``ghosts'' in the Lee model were solely due to neglecting recoil). We believe that Model 3Y is suitable to verify the prediction $Z=0$ for unstable particles. Model 4Y seems at present beyond any control, but understanding Model 2, which is relevant to atomic resonances, would be an enormous advance towards grasping the main features of vacuum polarization. It should be noted that Herbert Fr\"{o}hlich's electron-phonon theory \cite{HF} is a true quantum field theory with vacuum polarization, and the failure of being able to handle it may well be the ``penalty'' for the problems found in BCS theory \cite{Sewell1}. The study of Model 1 in \cite{Wre1} shows clearly that the field-theory interaction through emission and absorption of (virtual) particles is of \emph{qualitatively} different nature from potential theory: in particular, the regeneration of the unstable state from the decay products is a virtual quantum phenomenon analogous to tunneling in potential theory which is, however, not present in potential theory. In the case of atomic resonances, Model 1 even accounts even for subtle physical aspects, such as the fact that the natural line width of Lamb states is much smaller than the corresponding (Lamb) shift, essential for the observability of the latter \cite{Wre1}: this is an example of the physical features mentioned in the beginning of this paragraph.

\end{document}